\documentclass[12pt, epsf, amssymb, graphicx, multirow, amsmath, color]{article}

\def\bfm#1{\mbox{\boldmath$#1$}}
\usepackage{amsmath,amssymb,amsfonts,amssymb,amsthm,fancyheadings}
\usepackage[numbers]{natbib}
\usepackage[colorlinks=true,allcolors=blue, bookmarks=false]{hyperref}
\usepackage{enumerate}
\usepackage[noend]{algpseudocode}
\usepackage{algorithmicx, algorithm}
\usepackage{MnSymbol}
\usepackage{graphicx}

\begin{document}

\voffset -0.5truecm \hoffset -1.5truecm


\renewcommand{\theequation}{\thesection.\arabic {equation}}

\newtheorem{proposition}{Proposition}
\newtheorem{corollary}{Corollary}
\newtheorem{definition}{Definition}
\newtheorem{example}{Example}
\newtheorem{lemma}{Lemma}
\newtheorem{property}{Property}
\newtheorem{remark}{Remark}
\newtheorem{theorem}{Theorem}
\newtheorem{Table}{Table}
\newtheorem{pf}{Proof}


\newcommand{\bbB}{{\mathbb{B}}}
\newcommand{\bbC}{{\mathbb{C}}}
\newcommand{\bbD}{{\mathbb{D}}}
\newcommand{\bbE}{{\mathbb{E}}}
\newcommand{\bbI}{{\mathbb{I}}}
\newcommand{\bbJ}{{\mathbb{J}}}
\newcommand{\bbK}{{\mathbb{K}}}
\newcommand{\bbN}{{\mathbb{N}}}
\newcommand{\bbO}{{\mathbb{O}}}
\newcommand{\bbR}{{\mathbb{R}}}
\newcommand{\bbS}{{\mathbb{S}}}
\newcommand{\bbT}{{\mathbb{T}}}
\newcommand{\bbU}{{\mathbb{U}}}
\newcommand{\bbV}{{\mathbb{V}}}
\newcommand{\bbX}{{\mathbb{X}}}
\newcommand{\bbY}{{\mathbb{Y}}}
\newcommand{\bbZ}{{\mathbb{Z}}}


\newcommand{\cA}{{\cal A}}         \newcommand{\dA}[1]{{\cal A}_{#1}}
\newcommand{\cB}{{\cal B}}         \newcommand{\dB}[1]{{\cal B}_{#1}}
\newcommand{\cE}{{\cal E}}
\newcommand{\cI}{{\cal I}}
\newcommand{\cJ}{{\cal J}}
\newcommand{\cK}{{\cal K}}
\newcommand{\cN}{{\cal N}}
\newcommand{\cO}{{\cal O}}
\newcommand{\cS}{{\cal S}}         \newcommand{\dS}[1]{{\cal S}_{#1}}
\newcommand{\cR}{{\cal R}}
\newcommand{\cT}{{\cal T}}
\newcommand{\cU}{{\cal U}}
\newcommand{\cV}{{\cal V}}
\newcommand{\cX}{{\cal X}}
\newcommand{\cY}{{\cal Y}}
\newcommand{\cZ}{{\cal Z}}


\newcommand{\f}[1]{f_{_{\small {\rm #1}}}}
\newcommand{\hf}[1]{\hat{f}_{#1}}
\newcommand{\tf}[2]{f_{{#1}}^{{#2}}}

\newcommand{\F}[1]{F_{_{\small {\rm #1}}}}

\newcommand{\h}[1]{h_{_{\tiny\rm #1}}}
\newcommand{\n}[1]{n_{_{\tiny\rm #1}}}
\newcommand{\del}[1]{\delta_{_{\tiny\rm #1}}}
\newcommand{\pii}[2]{\pi_{_{\tiny\rm #1\!, \, #2}}}
\newcommand{\q}[1]{q_{_{#1}}}

\newcommand{\tintB}{\dis\int\nolimits\hspace*{-0.45cm}\Box\,}
\newcommand{\intB}[1]{\dis\int_{#1}\hspace*{-0.43cm}\Box\;}

\newcommand{\yb}[1]{{\bar{y}^{\rm #1}}}
\newcommand{\Y}[1]{Y_{\rm #1}}
\newcommand{\YY}[1]{Y_{\! \mbox{\scriptsize #1}}}
\newcommand{\y}[2]{y_{#1}^{\rm #2}}

\newcommand{\no}[2]{\|#1\|_{_{#2}}}
\newcommand{\nor}[3]{\|#1\|_{_{#2}}^{#3}}

\newcommand{\bismall}[2]{(\!\!\begin{array}{c} #1 \\ \vspace*{-0.6cm} \\ #2 \end{array} \!\! )}
\newcommand{\bi}[2]{\Big(\!\!\begin{array}{c} #1 \\ #2 \end{array} \!\!\Big)}
\newcommand{\twoc}[2]{\bigg(\!\!\begin{array}{c} #1 \\ #2  \end{array} \!\!\bigg)}

\newcommand{\twolr}[2]{(\!\!\begin{array}{c} #1 \\ #2  \end{array} \!\!)}

\newcommand{\twob}[2]{\lbrace {#1 \atop #2} \rbrace}
\newcommand{\twoB}[2]{\bigg\{\!\!\begin{array}{c} #1 \\ #2  \end{array} \!\!\bigg\}}

\newcommand{\fourc}[4]{\bigg(\!\!\begin{array}{cc} #1 & #2 \\ #3 & #4 \end{array} \!\!\bigg)}
\newcommand{\fourlr}[4]{(\!\!\begin{array}{cc} #1 & #2 \\ #3 & #4 \end{array} \!\!)}

\newcommand{\threec}[3]{\left(\!\!\begin{array}{c} #1 \\ #2 \\ #3  \end{array}
                              \!\!\right)}
\newcommand{\threev}[3]{\left(\!\!\begin{array}{c} #1 \\   #2 \\   #3 \end{array}
                              \!\!\right)}

\newcommand{\threetwo}[6]{\left(\!\!\begin{array}{cc}
#1 & #2 \\
#3 & #4 \\
#5 & #6
\end{array} \!\!\right)}
\newcommand{\threethree}[9]{\left(\!\!\begin{array}{ccc}
#1 & #2 & #3 \\
#4 & #5 & #6 \\
#7 & #8 & #9
\end{array} \!\!\right)}

\newcommand{\Cases}[4]{\left\{\!\!\begin{array}{ll} #1, & #2,  \\ [3mm]
                                                    #3, & #4  \end{array} \right. }
\newcommand{\CASES}[5]{\left\{\!\!\begin{array}{ll} #1, & #2,  \\ #3
                                                    #4, & #5  \end{array} \right. }

\newcommand{\Casesthree}[6]{\left\{\!\!\begin{array}{ll} #1 & #2  \\ [2mm]
                                                         #3 & #4  \\ [2mm]
                                                         #5 & #6  \end{array} \right. }

\newcommand{\sr}[1]{\stackrel{#1}{=}}
\newcommand{\srge}[1]{\stackrel{#1}{\ge}}
\newcommand{\mc}[1]{\multicolumn{1}{c}{#1}}
\newcommand{\MC}[3]{\multicolumn{#1}{#2}{#3}}
\newcommand{\TC}[4]{\contentsline {#1}{\numberline {#2}#3}{#4}}
\newcommand{\LIST}[3]{\contentsline {section}{\numberline {\hspace*{-0.5cm}#1}#2}{#3}}

\newcommand{\Hpi}[1]{\hat{\pi}_{_{\tiny\rm #1}}}
\newcommand{\titem}[1]{\vspace*{-0.15cm}\item[{\rm #1} ]\hspace*{0.1cm}}
\newcommand{\ub}[1]{\underline{\bf #1}}
\newcommand{\RED}[1]{{\color{red} #1}}
\newcommand{\BLUE}[1]{{\color{blue} #1}}

\newcommand{\two}[2]{{#1}_{_{\tiny\rm #2}}}
\newcommand{\three}[3]{{#1}_{_{#2, \tiny\rm #3}}}

\newcommand{\ct}[1]{c^{(t)}_{#1}}


\newcommand{\0}{{\bf 0}\!\!\!\!{\bf 0}}
\newcommand{\1}{{\bf 1}\hspace*{-0.2cm}{\bf 1}}
\def\b1{{\bf 1\!\!\!1}}  


\newcommand{\ba}{{\bf a}}
\newcommand{\bb}{{\bf b}}
\newcommand{\bc}{{\bf c}}
\newcommand{\bd}{{\bf d}}
\newcommand{\bm}{{\bf m}}
\newcommand{\bn}{{\bf n}}
\newcommand{\bp}{{\bf p}}
\newcommand{\bq}{{\bf q}}
\newcommand{\br}{{\bf r}}
\newcommand{\bs}{{\bf s}}
\newcommand{\bt}{{\bf t}}
\newcommand{\bat}{{\bf At}}
\newcommand{\bu}{{\bf u}}
\newcommand{\bv}{{\bf v}}
\newcommand{\bw}{{\bf w}}
\newcommand{\bx}{{\bf x}}
\newcommand{\by}{{\bf y}}
\newcommand{\bz}{{\bf z}}

\newcommand{\bA}{{\bf A}}
\newcommand{\bB}{{\bf B}}
\newcommand{\bC}{{\bf C}}
\newcommand{\bD}{{\bf D}}
\newcommand{\bH}{{\bf H}}
\newcommand{\bI}{{\bf I}}
\newcommand{\bJ}{{\bf J}}
\newcommand{\bL}{{\bf L}}
\newcommand{\bM}{{\bf M}}
\newcommand{\bN}{{\bf N}}
\newcommand{\bO}{{\bf O}}
\newcommand{\bP}{{\bf P}}
\newcommand{\bQ}{{\bf Q}}
\newcommand{\bS}{{\bf S}}
\newcommand{\bT}{{\bf T}}
\newcommand{\bU}{{\bf U}}
\newcommand{\bV}{{\bf V}}
\newcommand{\bW}{{\bf W}}
\newcommand{\bX}{{\bf X}}
\newcommand{\bY}{{\bf Y}}
\newcommand{\bZ}{{\bf Z}}


\newcommand{\iba}{{\boldsymbol a}}   \newcommand{\ibA}{{\boldsymbol A}}
\newcommand{\ibb}{{\boldsymbol b}}   \newcommand{\ibB}{{\boldsymbol B}}
\newcommand{\ibc}{{\boldsymbol c}}   \newcommand{\ibC}{{\boldsymbol C}}
\newcommand{\ibe}{{\boldsymbol e}}
                                     \newcommand{\ibH}{{\boldsymbol H}}
\newcommand{\ibp}{{\boldsymbol p}}
\newcommand{\ibr}{{\boldsymbol r}}   \newcommand{\ibR}{{\bfm R}}
\newcommand{\ibs}{{\boldsymbol s}}
\newcommand{\ibt}{{\boldsymbol t}}

\newcommand{\ibu}{{\boldsymbol u}}   \newcommand{\ibU}{{\boldsymbol U}}
\newcommand{\ibv}{{\boldsymbol v}}   \newcommand{\ibV}{{\boldsymbol V}}
\newcommand{\ibw}{{\boldsymbol w}}   \newcommand{\ibW}{{\boldsymbol W}}
\newcommand{\ibx}{{\boldsymbol x}}   \newcommand{\ibX}{{\boldsymbol X}}
\newcommand{\iby}{{\boldsymbol y}}   \newcommand{\ibY}{{\boldsymbol Y}}
\newcommand{\ibz}{{\boldsymbol z}}


\newcommand{\hb}{\hat{b}}
\newcommand{\hh}{\hat{h}}
\newcommand{\hm}{\hat{m}}
\newcommand{\hp}{\hat{p}}
\newcommand{\hq}{\hat{q}}
\newcommand{\hR}{\hat{R}}
\newcommand{\hbr}{\hat{\br}}
\newcommand{\hth}{\hat{\theta}}          \newcommand{\hTh}{\hat{\Theta}}
\newcommand{\hbth}{{\boldsymbol{\hth}}}
\newcommand{\hbbe}{{\boldsymbol{\hbe}}}
\newcommand{\hbmu}{{\boldsymbol{\hmu}}}
\newcommand{\hbnu}{{\boldsymbol{\hnu}}}
\newcommand{\hbSi}{{\boldsymbol{\hSi}}}
\newcommand{\hbp}{{\boldsymbol{\hp}}}
\newcommand{\heta}{\hat{\eta}}
\newcommand{\hvth}{\hat{\vartheta}}
\newcommand{\hvphi}{\hat{\log}}
\newcommand{\hmu}{\hat{\mu}}
\newcommand{\hnu}{\hat{\nu}}
\newcommand{\hsi}{\hat{\sigma}}   \newcommand{\hSi}{\hat{\Sigma}}
\newcommand{\hal}{\hat{\alpha}}   \newcommand{\hbe}{\hat{\beta}}
\newcommand{\hga}{\hat{\gamma}}
\newcommand{\hpsi}{\hat{\psi}}
\newcommand{\hxi}{\hat{\xi}}
\newcommand{\hpi}{\hat{\pi}}
\newcommand{\hla}{\hat{\lambda}}  \newcommand{\hbla}{\hat{{\bfm \lambda}}}
\newcommand{\hrho}{\hat{\rho}}
\newcommand{\hphi}{\hat{\phi}}
\newcommand{\hbvth}{\hat{\bf \vartheta}}

\newcommand{\whCorr}{\widehat{\mbox{Corr}}}
\newcommand{\whVar}{\widehat{\mbox{Var}}}
\newcommand{\whse}{\widehat{\mbox{se}}}
\newcommand{\whPr}{\widehat{\Pr}}

\newcommand{\tD}{\tilde{D}}
\newcommand{\tp}{\tilde{p}}           \newcommand{\tbp}{\tilde{{\bfm p}}}
\newcommand{\tx}{\tilde{x}}
\newcommand{\ty}{\tilde{y}}
\newcommand{\tth}{\tilde{\theta}}     \newcommand{\tbth}{\tilde{{\bfm \theta}}}
\newcommand{\tpi}{\tilde{\pi}}
\newcommand{\tmu}{\tilde{\mu}}        \newcommand{\tbmu}{\tilde{\bmu}}
\newcommand{\tbe}{\tilde{\beta}}      \newcommand{\tbbe}{\tilde{{\bfm \beta}}}
\newcommand{\tsi}{\tilde{\sigma}}     \newcommand{\tSi}{\tilde{\Sigma}}
\newcommand{\tbSi}{\tilde{\bSi}}
\newcommand{\tpsi}{\tilde{\psi}}
\newcommand{\txi}{\tilde{\xi}}
\newcommand{\tla}{\tilde{\lambda}}
\newcommand{\tphi}{\tilde{\phi}}
\newcommand{\trho}{\tilde{\rho}}
\newcommand{\tnu}{\tilde{\nu}}

\newcommand{\wtF}{\widetilde{F}}

\newcommand{\Bu}{\bar{u}}
\newcommand{\Bv}{\bar{v}}
\newcommand{\Bw}{\bar{w}}
\newcommand{\Bx}{\bar{x}}
\newcommand{\By}{\bar{y}}       \newcommand{\BY}{\bar{Y}}
\newcommand{\Bz}{\bar{z}}
\newcommand{\Bxi}{\bar{\xi}}
\newcommand{\BVar}{\overline{\mbox{Var}}}


\newcommand{\al}{\alpha}
\newcommand{\be}{\beta}
\newcommand{\ga}{\gamma}            \newcommand{\Ga}{\Gamma}
\newcommand{\de}{\delta}            \newcommand{\De}{\Delta}
\newcommand{\la}{\lambda}           \newcommand{\La}{\Lambda}
\renewcommand{\th}{\theta}          

\newcommand{\thx}{\theta_x}         \newcommand{\Thx}{\Theta_x}
\newcommand{\thy}{\theta_y}
\newcommand{\si}{\sigma}            \newcommand{\Si}{\Sigma}

\newcommand{\ka}{\kappa}
\newcommand{\om}{\omega}            \newcommand{\Om}{\Omega}
\newcommand{\ve}{\varepsilon}
\newcommand{\vp}{\varphi}
\newcommand{\vr}{\varrho}
\newcommand{\vth}{\vartheta}

\newcommand{\tht}{\theta^{(t)}}
\newcommand{\mut}{\mu^{(t)}}
\newcommand{\sit}{\si^{(t)}}
\newcommand{\st}{s^{(t)}}


\newcommand{\bxi}{{\bfm \xi}}      \newcommand{\bet}{{\bfm \eta}}
\newcommand{\bphi}{{\bfm \phi}}    \newcommand{\bep}{{\bfm \epsilon}}
\newcommand{\bmu}{{\bfm \mu}}      \newcommand{\bmut}{\bmu^{(t)}}
\newcommand{\bnu}{\boldsymbol{\nu}}

\newcommand{\bla}{{\bfm \lambda}}  \newcommand{\bLa}{\boldsymbol{\Lambda}}
\newcommand{\bSi}{{\bfm \Sigma}}
\newcommand{\bom}{{\bfm \omega}}   \newcommand{\bOm}{{\bfm \Omega}}
\newcommand{\bde}{{\bfm \delta}}   \newcommand{\bDe}{{\bfm \Delta}}
\newcommand{\bth}{\boldsymbol{\theta}}    \newcommand{\bTh}{\boldsymbol{\Theta}}
\newcommand{\btht}{\bth^{(t)}}
\newcommand{\bsi}{{\boldsymbol \sigma}}
\newcommand{\bbe}{{\boldsymbol \beta}}
\newcommand{\bpsi}{{\bfm \psi}}          \newcommand{\bPsi}{\boldsymbol{\Psi}}
\newcommand{\bpi}{{\bfm \pi}}
\newcommand{\bve}{{\bfm \varepsilon}}
\newcommand{\bvth}{\bfm \vartheta}

                                        \newcommand{\bGa}{\boldsymbol{\Gamma}}


\newcommand{\ALaplace}{\mbox{ALaplace}}
\newcommand{\Bernoulli}{\mbox{Bernoulli}}
\newcommand{\Binomial}{\mbox{Binomial}}
\newcommand{\BBinomial}{\mbox{BBinomial}}
\newcommand{\BP}{\mbox{BP}}
\newcommand{\Categorical}{\mbox{Categorical}}
\newcommand{\D}{\mbox{D}}
\newcommand{\Degenerate}{\mbox{Degenerate}}
\newcommand{\DExponential}{\mbox{DExponential}}
\newcommand{\Dirichlet}{\mbox{Dirichlet}}
\newcommand{\DMultinomial}{\mbox{DMultinomial}}
\newcommand{\Exponential}{\mbox{Exponential}}
\newcommand{\Exp}{\mbox{Exp}}
\newcommand{\FDiscrete}{\mbox{FDiscrete}}
\newcommand{\FN}{\mbox{FN}}
\newcommand{\GD}{\mbox{GD}}
\newcommand{\GDirichlet}{\mbox{GDirichlet}}
\newcommand{\GLiouville}{\mbox{GLiouville}}
\newcommand{\GP}{\mbox{GP}}
\newcommand{\GPD}{\mbox{GPD}}
\newcommand{\GPZIP}{\mbox{GP\_ZIP}}
\newcommand{\GPoisson}{\mbox{GPoisson}}
\newcommand{\Hgeometric}{\mbox{Hgeometric}}
\newcommand{\HPP}{\mbox{HPP}}

\newcommand{\I}{\mbox{I}}
\newcommand{\IBeta}{\mbox{IBeta}}
\newcommand{\Ichi}{\mbox{I}\raisebox{0.5ex}{$\chi$}}
\newcommand{\IG}{\mbox{IG}}
\newcommand{\IGamma}{\mbox{IGamma}}
\newcommand{\IGaussian}{\mbox{IGaussian}}
\newcommand{\IWishart}{\mbox{IWishart}}
\newcommand{\Laplace}{\mbox{Laplace}}
\newcommand{\Liouville}{\mbox{Liouville}}
\newcommand{\Logistic}{\mbox{Logistic}}

\newcommand{\Lognormal}{\mbox{Lognormal}}
\newcommand{\mBeta}{\mbox{Beta}}
\newcommand{\mGamma}{\mbox{Gamma}}
\newcommand{\Multinomial}{\mbox{Multinomial}}
\newcommand{\MVT}{\mbox{MVT}}
\newcommand{\N}{\mbox{N}}
\newcommand{\NBinomial}{\mbox{NBinomial}}
\newcommand{\ND}{\mbox{ND}}
\newcommand{\NDirichlet}{\mbox{NDirichlet}}
\newcommand{\NHPP}{\mbox{NHPP}}
\newcommand{\NIW}{\mbox{NIW}}
\newcommand{\PIG}{\mbox{PIG}}
\newcommand{\Poisson}{\mbox{Poisson}}
\newcommand{\SN}{\mbox{SN}}
\newcommand{\TBeta}{\mbox{TBeta}}
\newcommand{\TN}{\mbox{TN}}
\newcommand{\Tt}{\mbox{Tt}}
\newcommand{\Wishart}{\mbox{Wishart}}
\newcommand{\ZIGP}{\mbox{ZIGP}}
\newcommand{\ZIP}{\mbox{ZIP}}
\newcommand{\ZOIP}{\mbox{ZOIP}}
\newcommand{\ZINB}{\mbox{ZINB}}
\newcommand{\StN}{\mbox{StN}}
\newcommand{\SLN}{\mbox{SLN}}
\newcommand{\GI}{\mbox{GI}}


\newcommand{\Corr}{\mbox{Corr}}
\newcommand{\Cov}{\mbox{Cov}}
\newcommand{\CV}{\mbox{CV}}
\newcommand{\data}{\mbox{data}}
\newcommand{\EM}{\mbox{EM}}
\newcommand{\KL}{\mbox{KL}}
\newcommand{\logit}{\mbox{logit}}
\newcommand{\median}{\mbox{median}}
\newcommand{\MSE}{\mbox{MSE}}
\newcommand{\rank}{\mbox{rank}\,}
\newcommand{\se}{\mbox{se}}
\newcommand{\Se}{\mbox{Se}}
\newcommand{\SE}{\mbox{SE}}
\newcommand{\tr}{\mbox{$\,$tr$\,$}}
\newcommand{\Var}{\mbox{Var}}

\newcommand{\col}{\mbox{: }}
\newcommand{\diag}{\mbox{diag}}
\newcommand{\qand}{\quad \mbox{and} \quad}
\newcommand{\qag}{\quad \mbox{against} \quad}
\newcommand{\qas}{\quad \mbox{as} \quad}
\newcommand{\qif}{\mbox{if }\; }
\newcommand{\qor}{\quad \mbox{or} \quad}
\newcommand{\qve}{\quad \mbox{versus} \quad}
\newcommand{\qwh}{\quad \mbox{where} \quad}

\newcommand{\RE}{\mbox{RE}}
\newcommand{\yes}{\mbox{yes}}
\newcommand{\No}{\mbox{no}}
\newcommand{\DPP}{\mbox{DPP}}
\newcommand{\rd}{\,\mbox{d}}
\newcommand{\e}{\mbox{e}}
\newcommand{\w}{\mbox{w}}
\renewcommand{\ge}{\geqslant}
\renewcommand{\le}{\leqslant}


\newcommand{\II}{I\hspace*{-0.07cm}I}
\newcommand{\III}{I\hspace*{-0.07cm}I\hspace*{-0.07cm}I}
\newcommand{\IR}{{I\!\! R}}
\newcommand{\IV}{I\!V}
\newcommand{\et}{{\it et al}.}
\newcommand{\Et}{{\it et al}.\,}
\newcommand{\trv}{r.v.\ \!}

\newcommand{\yikH}{y_{ik}^{\rm H}}
\newcommand{\ybH}{\bar{y}^{\rm H}}


\newcommand{\sd}{\stackrel{\rm d}{=}}
\newcommand{\sdt}{$ {\small $\sd$} $}
\newcommand{\heq}{\;\hat{=}\;}
\newcommand{\teq}{\stackrel{\triangle}{=}}
\newcommand{\iid}{\stackrel{{\rm iid}}{\sim}}
\newcommand{\tiid}{i.i.d.$\hspace*{0.08cm}$}
\newcommand{\ind}{\stackrel{{\rm ind}}{\sim}}
\newcommand{\dsim}{\stackrel{.}{\sim}}
\newcommand{\dis}{\displaystyle}
\newcommand{\tex}{\textstyle}
\newcommand{\cf}{cf.$\hspace*{0.1cm}$}
\newcommand{\T}{\!\top\!}
\newcommand{\na}{\nabla}
\newcommand{\noi}{\noindent}
\newcommand{\ra}{\rightarrow}
\newcommand{\pr}{\propto}
\newcommand{\eq}{\equiv}
\newcommand{\pa}{\partial}
\newcommand{\ol}{\overline}
\newcommand{\non}{\nonumber}
\newcommand{\ap}{\approx}
\newcommand{\Bot}{\;\bot\;}
\newcommand{\inde}{\upmodels}        
\newcommand{\btu}{\bigtriangleup}
\newcommand{\ddint}{\Box\!\!\!\!\!\!\!\int}


\newcommand{\vs}{\vspace*{-0.25cm}}
\newcommand{\vkl}{\vskip 0.10in}
\newcommand{\vkL}{\vskip 0.15in}
\newcommand{\vkU}{\vskip 0.30in}


\newcommand{\namelistlabel}[1]{\mbox{#1}\hfil}
\newenvironment{namelist}[1]{%
\begin{list}{}
       {\let \makelabel \namelistlabel
        \settowidth{\labelwidth}{#1}
        \setlength{\leftmargin}{1.1\labelwidth}   }
        }{%
\end{list} }

\def\bds{\begin{description} \itemsep=-\parsep \itemindent=-0.7 cm}
\def\eds{\end{description}}
\def\i{\item}


\begin{flushleft}
{\Large \bf The normalized expectation--maximization (N-EM) algorithm}
\vkU
\end{flushleft}

\begin{flushleft}
{\bf Guo-Liang TIAN$^{{\rm a}, \dag}$},   \ \
{\bf Xuanyu LIU$^{{\rm b}, \dag}$}   \ \  and \ \
{\bf Yuanfan ZHAO$^{{\rm a}, *}$}  \ \

\vkL
{\small \it $^{\rm a}$Department of Statistics and Data Science,
                      Southern University of Science and Technology, \\ Shenzhen 518055, Guangdong Province, P. R. China}
\vkL
{\small \it $^{\rm b}$Department of Statistics and Actuarial Science, School of Computing and Data Science, \\
                      The University of Hong Kong, Pokfulam Road, Hong Kong, P. R. China}

\vkL
{\small  $^{\dag}$Equally contributed} \\
{\small  $^*$Corresponding author's Email: \textsf{1243116@mail.sustech.edu.cn} }\\
{\small \textsf{22 July 2026 ZYF}   \quad  \textsf{22 July 2026 TGL}}
\end{flushleft}

%
%
%
%
%

\baselineskip 0.25in \vskip 0.05in \noi  {\bf Abstract}. Although the
\textit{expectation--maximization} (EM) algorithm is a powerful optimization tool in statistics, it can only be applied to missing/incomplete data problems or to problems with a latent-variable structure. It is well known that the introduction of latent variables (or the data augmentation) is an art; i.e., it could only be done case by case. In this paper, we propose a new algorithm, a so-called \textit{normalized EM} (N-EM) algorithm, for a class of log-likelihood functions with integrals. As an extension of the original EM algorithm, the N-EM algorithm inherits all advantages of EM-type algorithms and consists of three steps: normalization step (N-step), expectation step (E-step) and maximization step (M-step), where the N-step is to construct a \textit{normalized density function} (ndf), the E-step is to compute a well-established surrogate $Q$-function and the M-step is to maximize the $Q$-function as in the original EM algorithm. The ascent property, the best choice of the ndf, and those N-EM algorithms with a difficult M-step are also explored. By multiple real applications, we have shown that the N-EM algorithm can solve some problems which cannot be addressed by the EM algorithm. Next, for problems to which the EM can be applied (often case by case), the N-EM algorithm can be employed in a unified framework. Numerical experiments are performed and convergence properties are also established.

\vkL \noi {\bf Keywords}: EM algorithm; Jensen's inequality; MM algorithm; N-EM algorithm; Normalized density function; Optimal lower bound.

\baselineskip 0.30in
\setcounter{equation}{0}


\section{$\!\!\!\!\!\!\!\!$. Introduction}  

The \textit{expectation-maximization} (EM) algorithm (Dempster \Et 1977) is used to find (local) maximum likelihood parameters of a statistical model in cases where the equations cannot be solved directly. This is a general-purpose algorithm for maximum likelihood estimation in multitudinous situations best described as incomplete-data problems, where algorithms such as the Newton-Raphson method may turn out to be more complicated.

The EM algorithm can be put into use in problems including not only classic incomplete-data situations, where there are missing data (Fernandez 2010, Suesse \& Zammit-Mangion 2017), censored (Vaida \&  Liu 2009,  Dejardin \& Lesaffre 2013) or truncated observations (Tian \Et 2019), but also a wide variety of other cases, where the incompleteness of the data is not natural or obvious. These include statistical models such as random effects (Wang \Et 2007, Jose \& Burzykowski 2010), mixtures (Karlis \& Xekalaki 1999, Hung \& Chang-Chien 2017, Panic \Et 2020), log-linear models (Kim 1997) and latent variable structures (Xie \& Simpson 2015, Nakashima \Et 2020). The EM algorithm can help directly deal with a large amounts of maximum likelihood problems, or can simplify a complex problem in deriving its \textit{maximum likelihood estimates} (MLEs), this is because the complete-data likelihood usually has a relatively nice form for many statistical problems. What is more, due to its versatility and desirable properties, EM algorithm has been found applications in a broad range of statistical contexts and in almost all areas where statistical techniques are applied, and it has also been developed for PET and SPECT Data (Brailean \Et 1992, Kay 1997, Hoogerheide 2012), Gaussian process (Ranjan \Et 2016, Wu \& Ma 2018, Wu \& Ma 2019), cluster analysis (Chen \& Jiang 2008, Kadir \Et 2014), hidden markov models (Cappe 2011), quantile regression (Yang 2018), isoform quantification from RNA-seq data analysis (Deng \Et 2019), folded normal regression (Liu \Et 2020) and survival data (McLachlan \& Krishnan 2007).

Although the EM algorithm has become a standard tool in statistician's repertoire, it is not free of imperfection, many of which came to light in attempting to apply it in certain complex incomplete-data problems and some even in innocuously simple incomplete-data problems. Accordingly, several modifications and extensions of the algorithm had been developed to partially overcome of these limitations. The developments were in the orientation of iterative simulation techniques, such as the Monte Carlo E-step (Wei \& Tanner 1990), Stochastic EM algorithm (Celeux 1985, Celeux \& Diebolt 1986), Data Augmentation ( Rubin 1987, Wei \& Tanner 1990a), and the Gibbs sampler (Casella \Et George 1992). It is worth noting that the EM algorithm has many similarities and connections with the Markov chain Monte Carlo algorithm (Ripley 2009), especially EM with data augmentation and Gibbs sampling.

However, there still exist some obstacles for users to formulate the incompleteness in an appropriate manner to facilitate the application of the EM algorithm. One of such obstacles is the inability of ingenious expressions of a given problem as an unobvious incomplete-data structure within EM framework. What is worse, some found that inappropriate missing data structure could even bog down the problems. In other words, the EM algorithm can only be applied to such likelihoods in the form of
$$
    L(\bth | \Y{obs}) = f(\Y{obs} | \bth) = \int_{\bbZ} f(\Y{obs}, \ibz | \bth) \rd \ibz,
$$
where $f(\Y{obs},\ibz | \bth)$ must be the joint density of the complete data, and $\ibz$ is the missing data.

Therefore, finding unobvious missing-data structures is the most urgent challenge and just the drawback of the original EM algorithm, which even needs some afflatus. Even if we could find a valid missing structure, we need design it case by case, lacking a unified approach to solve this problem systematically.
In addition, the existing algorithms can not solve the objective function with integral except numerical algorithm such as Newton-Cotes and similar numerical quadrature methods, or Monte Carlo integration.

The main objective of this paper is to propose a novel and instructive \textit{ normalized expectation-maximization} (N-EM) algorithm for estimating parameters of a complex likelihood function with an integral term or finite homogeneous summation, which provides a unified framework to solve a class of problems instead of designing case by case. More importantly, this is a formula-derived method to establish the EM algorithm, avoiding the difficulty of constrcuting missing structures skillfully, which has laid a solid foundation for further promotion and wider use of EM algorithms. On the other hand, this algorithm brings a new perspective to optimize the objective function with integral systematically. Compared with the ordinary methods, this algorithm has its superiority in relieving computational burden.

The N-EM algorithm consists of the normalization step (N-step), expectation step (E-step) and maximization step (M-step). E-step is to modify the surrogate function $Q$-function to maximize the lower bound, under the condition that the parameters are fixed. And, N-step is to calculate the pivotal functional parameter in the $Q$-function of E-step. We use M-step to optimize the $Q$-function with respect to parameter $\bth$. More importantly, the optimality of the normalized density function found by N-step and the ascent property show the theoretical properties of algorithm. Furthermore, we can use the N-EM algorithm theory \RED{to interpret} the EM algorithm, especially, the conditional predictive density.

In the section of applications, we divide into three subsections, respectively, solving some problems for the first time by applying the N-EM algorithm; unifying three kinds of traditional EM algorithms in the same N-EM framework and the case of finite discrete summation. We use these applications to fully illustrate that with obvious structural features, we can quickly judge the applicability of N-EM algorithms. Then, we can simply propose N-EM algorithms instead of searching for nebulous missing-structures. Furthermore, the proposed N-EM algorithm can solve a large number of problems that can not be solved by traditional EM algorithms. Those adequately illustrate the many specialties of N-EM algorithms such as easy to judge, formula-driven, wide-applied, and simple-designed, which is a unified framework of EM algorithms for different kinds of drivers.

Finally, some numerical experiments are conducted in order to evaluate the performance of the N-EM algorithm for different distributions and models.

The rest of the paper is organized as follows. The N-EM algorithm is introduced in Section 2, including the definition of the finite discrete integral, methodology of  integral, the finite discrete integral and the integral term in the denominator, and the N-EM Interpretation of EM. Section 3 introduces the properties, including the ascent property of algorithm, and the optimality of the normalized density function. Section 4 introduces the applications in three subsections: solving some problems for the first time by applying the N-EM algorithm; unifying three kinds of ordinary EM algorithms in the same N-EM framework and the case of finite discrete summation. In Section 5, we investigate the theoretical behaviors of the N-EM algorithms such as the local and global convergences. Numerical experiments are conducted in Section 6 to assess their practical performance.
\section{$\!\!\!\!\!\!\!\!$. The normalized EM algorithm and \\ \hspace*{-0.35cm} its monotonic convergence}  

Let an $m$-dimensional random vector $\bx = (X_1, \ldots, X_m)^{\T}$ have the joint \textit{probability density function} (pdf) $f(\ibx; \bth)$, where $\ibx = (x_1, \ldots, x_m)^{\T} \in \bbR^m$ is the realization of $\bx$, $\bth = (\th_1, \ldots, \th_p)^{\T}$ is the parameter vector belonging to $\bTh$, and $\bTh \subseteq \bbR^p$ is the parameter space. Assume that the joint pdf $f(\cdot)$ is of the following form
\begin{eqnarray} \label{eqn2.A1}
   f(\ibx; \bth) = c(\ibx, \bth) \times \int_{\bbS} h(\ibs | \ibx, \bth) \rd \ibs,
\end{eqnarray}
where $c(\ibx, \bth)$ is a function of both $\ibx$ and $\bth$; $h(\cdot | \ibx, \bth) \ge 0$, $\int_{\bbS} h(\ibs | \ibx, \bth) \rd \ibs < \infty$, and $h(\cdot| \ibx, \bth)$ is not necessarily being a joint density. In this paper, we consider three cases.
\begin{namelist}{0123456}
\item[\hspace*{-0.08cm} Case I:]  $c(\ibx, \bth)$ is relatively simple and the integral term $\int_{\bbS} h(\ibs | \ibx, \bth) \rd \ibs$ appears in the numerator of $f(\ibx; \bth)$.

\item[\hspace*{-0.08cm} Case \II:] $c(\ibx, \bth)$ is very complicated.

\item[\hspace*{-0.08cm} Case \III:] The integral term $\int_{\bbS} h(\ibs | \ibx, \bth) \rd \ibs$ appears in the denominator of $f(\ibx; \bth)$.
\end{namelist}

\subsection{Formulation of the N-EM algorithm for Case I}  

Let random vectors $\bx_1, \ldots, \bx_n \iid f(\ibx; \bth)$ and $\Y{obs} = \{\ibx_i \}_{i=1}^n$ denote the observed data, where $\ibx_i = (x_{i1}, \ldots, x_{im})^{\T}$ is the realization of $\bx_i = (X_{i1}, \ldots, X_{im})^{\T}$, then the log-likelihood function of $\bth$ can be written as
\begin{eqnarray} \label{eqn2.1}
   \ell(\bth | \Y{obs}) = \sum_{i=1}^n \log [c(\ibx_i, \bth)] + \sum_{i=1}^n \log \left[\int_{\bbS} h(\ibs | \ibx_i, \bth ) \rd \ibs \right].
\end{eqnarray}

To calculate the MLEs $\hbth$ of $\bth$, similar to the original EM algorithm with two steps (i.e., E- and M-steps), we propose an N-EM algorithm with following three steps (i.e., normalizing the function $h(\cdot| \ibx, \bth)$ in the N-step, calculating the $Q$-function in the E-step, and maximizing the $Q$-function with respect to $\bth$ in the M-step):
\begin{namelist}{01234567}
\item[\hspace*{-0.08cm} N-step:] By normalizing the function $h(\ibs | \ibx_i, \bth)$, we first construct a so-called \textit{normalizing density function} (ndf) $g(\ibs | \ibx_i, \bth) = h(\ibs | \ibx_i, \bth) \big/\int_{\bbS} h(\ibs' | \ibx_i, \bth) \rd \ibs'$ for $\ibs \in \bbS$, so that
    \begin{eqnarray} \label{eqn2.2}
       g(\ibs | \ibx_i, \btht) = \frac{h(\ibs | \ibx_i, \btht)}{\dis \int_{\bbS} h(\ibs' | \ibx_i, \btht) \rd \ibs'}, \qquad \ibs \in \bbS,
    \end{eqnarray}
    is the ndf of $h(\ibs | \ibx_i, \btht)$, where $\btht$ denotes the $t$-th approximation of $\hbth$.

\item[\hspace*{-0.08cm} E-step:] To calculate the $Q$-function as follows:
    \begin{eqnarray} \label{eqn2.3}
       Q(\bth | \btht) \teq \sum_{i=1}^n \log [c(\ibx_i, \bth)] + \sum_{i=1}^n \int_{\bbS} \log [h(\ibs | \ibx_i, \bth)] \cdot g(\ibs | \ibx_i, \btht) \rd  \ibs,
    \end{eqnarray}
    where the $i$-th integral is exactly the \textit{expectation} of $\log [h(\bs | \ibx_i, \bth)]$ with respect to the density $g(\ibs | \ibx_i, \btht)$.

\item[\hspace*{-0.08cm} M-step:] Given $\btht$, by maximizing the $Q$-function with respect to $\bth$, we update $\btht$ by
    \begin{eqnarray} \label{eqn2.TGL5}
       \bth^{(t+1)}=\arg \, \max _{\bth \in \bTh} Q(\bth | \btht).
    \end{eqnarray}
\end{namelist}
The N-, E- and M-steps are alternately repeated till $|\ell(\bth^{(t+1)} | \Y{obs}) - \ell(\btht | \Y{obs})| \le \de_0$, where $\de_0$ is a pre-specified small positive number.

In the following subsections, we shall address four important issues: (i) How to choose the ndf $g(\cdot | \ibx_i, \bth)$ and why is (\ref{eqn2.2}) the best choice? see \S 2.3;  (ii) How to derive the $Q$-function specified in (\ref{eqn2.3}) and how to prove the monotone convergence of the N-EM algorithm when $\bth^{(t+1)}$ is defined by (\ref{eqn2.TGL5})? see \S 2.2; (iii) Why is the original EM algorithm a special case of the N-EM algorithm? see \S 2.4; (iv) How to deal with difficult M-steps in the N-EM algorithm for Cases \II--\III? see \S 2.5.

\subsection{Ascent property of the N-EM algorithm}  

To derive the ascent property of the N-EM algorithm (\ref{eqn2.2})--(\ref{eqn2.TGL5}), we first briefly review the MM principle, the integral version of Jensen's inequality and its one drawback.

\subsubsection{MM principle and optimal lower bound} 

Let $\ell(\bth | \Y{obs})$ be a concave function and we want to find its maximizer
\begin{equation} \label{eqn2.TGL6}
   \hbth = \arg \, \max_{\bth \in \bTh} \ell(\bth | \Y{obs}).
\end{equation}
The \textit{minorization--maximization} (MM) technique is a general principle of constructing monotonic algorithms (Becker \Et 1997, Lange \Et 2000, Hunter \& Lange 2004, Tian \Et 2019) for the optimization problem (\ref{eqn2.TGL6}). The basic idea is as follows. First, one needs to find a minorizing function $Q^*(\bth | \btht)$ satisfying two conditions:
\begin{equation} \label{eqn3.1}
   Q^*(\bth | \btht) \le \ell(\bth | \Y{obs}), \quad \forall \bth, \btht \in \bTh \qand Q^*(\btht | \btht)=\ell(\btht | \Y{obs}).
\end{equation}
In other words, for any given $\btht$, the $Q^*(\cdot| \btht)$ function always lies under $\ell(\cdot |\Y{obs})$ and is tangent to $\ell(\cdot |\Y{obs})$ at the point $\bth=\btht$. The $Q^*(\cdot| \btht)$ function satisfying (\ref{eqn3.1}) is also called the \textit{optimal lower bound} (OLB) of the objective function $\ell(\bth |\Y{obs})$ with the tangent point being $\bth=\btht$. Second, instead of maximizing the objective function $\ell(\bth | \Y{obs})$, one maximizes the OLB $Q^*(\cdot | \btht)$ to obtain its maximizer
\begin{equation} \label{eqn3.2}
   \bth^{(t+1)} = \arg \, \max _{\bth \in \bTh} Q^*(\bth | \btht).
\end{equation}
Subsequently, we have
\begin{equation} \label{eqn3.3}
   \ell(\bth^{(t+1)} | \Y{obs}) \ge Q^*(\bth^{(t+1)} | \btht) \ge  Q^*(\btht | \btht)=\ell(\btht | \Y{obs}).
\end{equation}
The MM iterate (\ref{eqn3.2}) possesses the ascent property deriving the objective  function $\ell(\bth  | \Y{obs})$ uphill. Under appropriate conditions of compactness and continuity, the ascent property (\ref{eqn3.3}) guarantees the convergence of the MM technique, and lends the algorithm monotone convergence.

\subsubsection{Integral version of Jensen's inequality and its one drawback} 

Let $X$ be a random variable taking values in the domain of a concave function $h(\cdot)$, Jensen's inequality states that $h[E(X)] \ge E[h(X)]$, provided that both $E(X)$ and $E[g(X)]$ exist. The continuous/integral version of Jensen's inequality is
\begin{equation} \label{eqn2.NEM1}
   h\left[\int_{\bbX} \tau(x) \cdot g(x) \rd x \right] \ge
   \int_{\bbX} h[\tau(x)] \cdot g(x) \rd x,
\end{equation}
where $\bbX \subseteq \bbR$, $\tau(\cdot)$ is an arbitrary real function, and $g(\cdot)$ is a pdf defined in $\bbX$. However, the inequality (\ref{eqn2.NEM1}) does not indicate that under what kind of conditions, the equality holds. The usefulness of (\ref{eqn2.NEM1}) can be found in the next subsection.

\subsubsection{Monotone convergence of the N-EM algorithm} 

This subsection shows that the N-EM algorithm (\ref{eqn2.2})--(\ref{eqn2.TGL5}) satisfies the two conditions in (\ref{eqn3.1}), i.e., the N-EM algorithm possesses the ascent property deriving the log-likelihood function $\ell(\bth| \Y{obs})$ uphill. By using the integral version of Jensen's inequality (\ref{eqn2.NEM1}), we can prove the first condition in  (\ref{eqn3.1}) as follows:
\begin{eqnarray}
   \ell(\bth |\Y{obs}) &\sr{(\ref{eqn2.1})} & \sum_{i=1}^n \log [c(\ibx_i, \bth)]+\sum_{i=1}^n \log \left[\int_{\bbS} h(\ibs | \ibx_i,\bth) \rd \ibs\right] \non \\ [2mm]
   &=& \sum_{i=1}^n \log [c(\ibx_i, \bth)] + \sum_{i=1}^n \log \left[\int_{\bbS} \frac{h(\ibs | \ibx_i, \bth)} {g(\ibs | \ibx_i, \btht)} \cdot g(\ibs | \ibx_i, \btht)\rd \ibs\right]  \qquad \label{eqn2.NEMtgl10} \\ [2mm]
   &\srge{(\ref{eqn2.NEM1})} & \sum_{i=1}^n \log [c(\ibx_i, \bth)] + \sum_{i=1}^n \int_{\bbS} \log \left[ \frac{h(\ibs | \ibx_i, \bth)} {g(\ibs | \ibx_i, \btht)}\right] \cdot g(\ibs | \ibx_i, \btht) \rd \ibs \non \\ [2mm]
   &\teq & Q^*(\bth|\btht)  \label{eqn2.NEM10}   \\ [2mm]
   &\sr{(\ref{eqn2.3})} & Q(\bth|\btht)-\sum_{i=1}^n \int_{\bbS} \log \left[g(\ibs | \ibx_i, \btht)\right] \cdot g(\ibs | \ibx_i, \btht) \rd \ibs, \non
\end{eqnarray}
where the last term on the right-hand side does not depend on $\bth$ and only depends on $\{\ibx_i\}_{i=1}^n$ and $\btht$, so that the difference between $Q^*(\bth|\btht)$ and $Q(\bth|\btht)$ in E-step of the N-EM algorithm is a constant not depending on $\bth$. Thus, we have
$$
   \bth^{(t+1)}=\arg \, \max _{\bth \in \bTh} Q^*(\bth | \btht) = \arg \, \max_{\bth \in \bTh} Q(\bth | \btht),
$$
which is the M-step (\ref{eqn2.TGL5}) exactly.  In (\ref{eqn2.NEM10}), let $\bth=\btht$, we obtain
\begin{eqnarray} \label{eqn2.NEM12}
   &   & Q^*(\btht|\btht) \non \\ [2mm]
   & \sr{(\ref{eqn2.NEM10})} & \sum_{i=1}^n \log [c(\ibx_i, \btht)] + \sum_{i=1}^n \int_{\bbS} \log \left[ \frac{h(\ibs | \ibx_i, \btht)}{g(\ibs | \ibx_i, \btht)}\right] \cdot g(\ibs | \ibx_i, \btht) \rd \ibs \non \\ [2mm]
   & \sr{(\ref{eqn2.2})} & \sum_{i=1}^n \log [c(\ibx_i, \btht)] + \sum_{i=1}^n \int_{\bbS} \log \left[ \frac{h(\ibs | \ibx_i, \btht)} {h(\ibs | \ibx_i, \btht)} \int_{\bbS} h(\ibs' | \ibx_i, \btht) \rd \ibs' \right] \cdot g(\ibs | \ibx_i, \btht) \rd \ibs   \quad \non \\ [2mm]
   &=& \sum_{i=1}^n \log [c(\ibx_i, \btht)] + \sum_{i=1}^n \log \left[ \int_{\bbS} h(\ibs | \ibx_i, \btht) \rd \ibs \right]  \non \\ [2mm]
   &\sr{(\ref{eqn2.1})} & \ell(\btht | \Y{obs}).
\end{eqnarray}
implying that the second condition in (\ref{eqn3.1}) is satisfied. Hence, the ascent property (\ref{eqn3.3}) holds for $\ell(\cdot | \Y{obs})$ and $Q^*(\cdot |\btht)$. Thus, under appropriate conditions of compactness and continuity, the ascent property guarantees the convergence of the proposed N-EM algorithm (\ref{eqn2.2})--(\ref{eqn2.TGL5}).

\subsection{Best choice of the $g(\ibs | \ibx_i, \btht)$ function}  

From (\ref{eqn2.NEM10})--(\ref{eqn2.NEM12}), we can see that $Q^*(\bth | \btht)$ defined by (\ref{eqn2.NEM10}) is the OLB of the log-likelihood function $\ell(\bth |\Y{obs})$ with the tangent point being $\bth=\btht$. The most important idea for the N-EM algorithm is to choose the best ndf $g(\ibs | \ibx_i, \btht)$ specified by (\ref{eqn2.2}) such that $Q^*(\bth |\btht)$ is the OLB of $\ell(\bth |\Y{obs})$, resulting in a faster convergence speed. This subsection will prove that the ndf $g(\ibs | \ibx_i, \btht)$ in (\ref{eqn2.2}) is the best choice to provide the OLB $Q^*(\bth|\btht)$.

In fact, if we replace $g(\ibs | \ibx_i, \btht)$ in (\ref{eqn2.NEMtgl10}) by any pdf $q_{\ibx_i}(\ibs)$ on the same support $\bbS$ for all $i=1, \ldots, n$, then (\ref{eqn2.NEMtgl10}) is still true; that is
\begin{eqnarray*}
   \ell(\bth | \Y{obs}) &=& \sum_{i=1}^n \log [c(\ibx_i, \bth)] + \sum_{i=1}^n \log \left[\int_{\bbS} \frac{h(\ibs | \ibx_i, \bth)}{q_{\ibx_i}(\ibs)} \cdot q_{\ibx_i}(\ibs) \rd \ibs \right]  \\ [2mm]
   &\srge{(\ref{eqn2.NEM1})} & \sum_{i=1}^n \log [c(\ibx_i, \bth)] + \sum_{i=1}^n \int_{\bbS} \log \left[ \frac{h(\ibs | \ibx_i, \bth)} {q_{\ibx_i}(\ibs)} \right] \cdot q_{\ibx_i}(\ibs) \rd \ibs  \\ [2mm]
   & \teq &  F (\Y{obs}, \bth | q_{\ibx_1}(\ibs), \ldots, q_{\ibx_n}(\ibs) ),
\end{eqnarray*}
which is a class of lower bounds of $\ell(\bth | \Y{obs})$.

We want to find the best expressions of $\{q_{\ibx_i}(\ibs)\}_{i=1}^n$, denoted by $\{q_{\ibx_i}^{(t)}(\ibs)\}_{i=1}^n$   depending on $\btht$, such that $F(\Y{obs}, \bth | q_{\ibx_1}^{(t)}(\ibs), \ldots, q_{\ibx_n}^{(t)}(\ibs))$ is the OLB of $\ell(\bth | \Y{obs})$ with the tangent point being $\bth=\btht$. In other words, we need to find the best pdfs
\begin{eqnarray} \label{eqn2.NEM13}
   \left\{ q^{(t)}_{\ibx_i}(\ibs)\right\}_{i=1}^n &=& \arg  \max_{\{q_{\ibx_i}(\ibs)\}_{i=1}^n} F \left(\Y{obs}, \btht | q_{\ibx_1}(\ibs), \ldots, q_{\ibx_n}(\ibs) \right) \qquad \\ [2mm]
   & & \mbox{subject to } \int_{\bbS} q_{\ibx_i}(\ibs) \rd \ibs =1, \quad i=1, \ldots, n.  \non
\end{eqnarray}
By the Lagrange multiplier method, the constrained optimization problem (\ref{eqn2.NEM13}) can be changed into the unconstrained optimization problem:
\begin{equation} \label{eqn2.NEM14}
   \left\{ q^{(t)}_{\ibx_i}(\ibs), \la_i^{(t)} \right\}_{i=1}^n = \arg  \max_{\{q_{\ibx_i}(\ibs),\; \la_i\}_{i=1}^n} \wtF \left(\Y{obs}, \btht | q_{\ibx_1}(\ibs), \ldots, q_{\ibx_n}(\ibs); \la_1, \ldots, \la_n \right),
\end{equation}
where $\{\la _i\}_{i=1}^n$ are Lagrange multipliers with $\la_i >0$, and
\begin{eqnarray*}
   & & \wtF \left(\Y{obs}, \btht | q_{\ibx_1}(\ibs), \ldots, q_{\ibx_n}(\ibs); \la_1, \ldots, \la_n \right) \\ [2mm]
   &\teq & F \left(\Y{obs}, \btht | q_{\ibx_1}(\ibs), \ldots, q_{\ibx_n}(\ibs) \right)  - \sum_{i=1}^n \la _i\left[ \int_{\bbS} q_{\ibx_i}(\ibs) \rd \ibs-1\right] \\ [2mm]
   &=& \sum_{i=1}^n \left\{ \log [c(\ibx_i, \btht)] + \int_{\bbS} \log \left[ \frac{h(\ibs | \ibx_i, \btht)} {q_{\ibx_i}(\ibs)} \right] \cdot q_{\ibx_i}(\ibs) \rd \ibs - \la _i\left[ \int_{\bbS} q_{\ibx_i}(\ibs) \rd \ibs-1\right] \right\} \\ [2mm]
   &\teq & \sum_{i=1}^n \wtF_i(\ibx_i, \btht| q_{\ibx_i}(\ibs), \la_i)
\end{eqnarray*}
is an additive separable function with respect to $\{\ibx_i, q_{\ibx_i}(\ibs), \la_i\}_{i=1}^n$. Thus, the complicated optimization problem (\ref{eqn2.NEM14}) is equivalent to $n$ simpler optimization problems:
$$
   \left\{ q^{(t)}_{\ibx_i}(\ibs), \la_i^{(t)} \right\} = \arg  \max_{\{q_{\ibx_i}(\ibs),\; \la_i\} } \wtF_i(\ibx_i, \btht| q_{\ibx_i}(\ibs), \la_i), \quad i=1, \ldots, n,
$$
which can be maximized separately. By functionally partially differentiating $\wtF_i$ with respect to $q_{\ibx_i}(\ibs)$ and $\la_i$ with the Euler--Lagrange equation, we obtain
$$
   q_{\ibx_i}(\ibs) = \e^{-1 - \la_i} \cdot h(\ibs | \ibx_i,\btht) \qand \la _i=-1 + \log \left[\int_{\bbS} h(\ibs | \ibx_i,\btht) \rd \ibs \right],
$$
so that for $i=1, \ldots, n$,
$$
   q_{\ibx_i}^{(t)}(\ibs) = \frac{h(\ibs | \ibx_i, \btht)}{\int_{\bbS} h(\ibs' | \ibx_i, \btht) \rd \ibs'} \sr{(\ref{eqn2.2})} g(\ibs | \ibx_i, \btht),  \qquad \ibs \in \bbS.
$$
In other words, the pdfs $\{q_{\ibx_i}^{(t)}(\ibs)\}_{i=1}^n$ can produce the OLB $F(\Y{obs}, \bth | q_{\ibx_1}^{(t)}(\ibs), \ldots, q_{\ibx_n}^{(t)}(\ibs))$ at $t$-th iteration of the N-EM algorithm.

\subsection{The original EM algorithm is a special case of \\ the N-EM algorithm}  

In the original EM algorithm (Dempster \Et 1977) with a missing-data structure, the E-step is to compute the surrogate function $Q^{\rm EM}(\bth | \btht)$ defined by
\begin{equation} \label{eqn2.NEM15}
   Q^{\rm EM}(\bth | \btht) = \int_{\bbZ} \ell(\bth | \Y{obs}, \ibz) \cdot f(\ibz | \Y{obs}, \btht) \rd  \ibz,
\end{equation}
where $\Y{obs}$ denotes the observed data, $\ibz$ is the latent vector (or missing data), $\ell(\bth | \Y{obs}, \ibz)$ is the log-likelihood function for the complete data and $f(\ibz | \Y{obs}, \bth)$ is the conditional predictive density. The M-step is to maximize $Q^{\rm EM}(\bth | \btht)$ about $\bth$ to obtain its maximizer
$$
   \bth^{(t+1)} = \arg\, \max_{\bth \in \bTh} Q^{\rm EM}(\bth | \btht).
$$
The two-step process is repeated until convergence occurs. We wonder how the three Harvard Professors originally derived the $Q^{\rm EM}(\bth | \btht)$ function defined by (\ref{eqn2.NEM15}). Our guess is that they utilized the integral version of Jensen's inequality which plays a crucial role in the process of EM derivation. This subsection will show that the EM algorithm is a special case of the proposed N-EM algorithm, confirming our guess.

In fact, the joint density of the observed data $\Y{obs} = \{\ibx_i\}_{i=1}^n$ is
$$
   f(\Y{obs} | \bth)=\int_{\bbZ} f(\Y{obs}, \ibz | \bth) \rd  \ibz=L(\bth| \Y{obs}),
$$
so that the observed-data log-likelihood function of $\bth$ is given by
$$
   \ell(\bth | \Y{obs})=\log [L(\bth | \Y{obs})] = \log \left[\int_{\bbZ} f(\Y{obs}, \ibz | \bth) \rd  \ibz\right],
$$
which has the form of (\ref{eqn2.1}). From the N-step of the N-EM algorithm, we take $g(\ibz | \Y{obs}, \bth)$ as the normalized density of the joint density $f(\Y{obs}, \ibz | \bth)$ of the complete data, i.e.,
\begin{equation} \label{eqn2.NEM16}
   g(\ibz | \Y{obs}, \bth) = \frac{f(\Y{obs}, \ibz | \bth)} {\dis \int_{\bbZ} f(\Y{obs}, \ibz' | \bth) \rd  \ibz'} =\frac{f(\Y{obs}, \ibz | \bth)} {f(\Y{obs} | \bth)} = f(\ibz | \Y{obs}, \bth),
\end{equation}
which is exactly the conditional predictive density. From the E-step of the N-EM algorithm, the $Q$-function can be established by
\begin{eqnarray*}
   Q(\bth | \btht) & \sr{(\ref{eqn2.3})} &  \int_{\bbZ} \log \left[f(\Y{obs}, \ibz | \bth) \right] \cdot g(\ibz | \ibx_i, \btht) \rd  \ibz \\ [2mm]
   &\sr{(\ref{eqn2.NEM16})} & \int_{\bbZ} \ell(\bth | \Y{obs}, \ibz) \cdot f(\ibz | \Y{obs}, \btht) \rd  \ibz,
\end{eqnarray*}
which is exactly the $Q^{\rm EM}(\bth | \btht)$ function defined in the original EM algorithm.

Finally, as shown in (\ref{eqn2.NEM16}), the fact of the conditional predictive density being the ndf of the joint density of the complete data, can be regraded as the essence of the original EM being a special case of the proposed N-EM algorithm.

\subsection{The N-EM algorithm with a difficult M-step for Cases \II--\III}  

For Case \II \ and Case \III, directly maximizing the $Q$-function (\ref{eqn2.3}) will result in that the M-step (\ref{eqn2.TGL5}) of the N-EM algorithm does not have explicit solutions. For such cases, we can utilize some useful inequalities (especially, Jensen's inequalities) to create a new minorizing function $Q^{\rm New}(\bth | \btht)$ as the OLB of the $Q(\bth | \btht)$ function with tangent point being $\bth=\btht$.

\subsubsection{Supporting hyperplane inequality} 

Let $\vp(z)$ is a \textit{convex} function defined on a convex set $\bbC$, i.e., $\vp''(z)\ge 0$ for all $z \in \bbC$. The supporting hyperplane inequality states that
\begin{equation} \label{eqn2.NEM17}
   \vp(z) \ge \vp(z_0) + (z-z_0)\vp'(z_0), \quad \forall z, z_0 \in \bbC,
\end{equation}
where the equality holds \textit{if and only if} (iff) $z=z_0$. The usefulness of (\ref{eqn2.NEM17}) can be found in Subsections 3.2.2--3.2.3.

\subsubsection{Discrete version of Jensen's inequality and beyond} 

Let $\phi(z)$ is a \textit{concave} function defined on a convex set $\bbC$, i.e., $\phi''(z)\le 0$ for all $z \in \bbC$. The discrete version of Jensen's inequality is
\begin{equation} \label{eqn2.NEM18}
   \phi \left( \sum_{k=1}^K \al_k z_k \right) \ge \sum_{k=1}^K \al_k \phi(z_k),
\end{equation}
which is true for any probability weights $\{\al_k\}_{k=1}^K$ satisfying: $\al_k >0$ and $\sum_{k=1}^K \al_k=1$. However, the inequality (\ref{eqn2.NEM18}) does not show that under what kind of conditions, the equality holds.

In the follows, for $K=2$, we can provide a sufficient and necessary condition such that the equality in (\ref{eqn2.NEM18}) holds. In fact, for $K=2$, we have
\begin{eqnarray*}
   \phi\big(u(\bth) + v(\bth) \big) &\ge &  \frac{u(\btht)}{u(\btht) + v(\btht)} \phi \left( \frac{u(\btht) + v(\btht)}{u(\btht)} u(\bth) \right) \\ [2mm]
   & &  +\; \frac{v(\btht)}{u(\btht) + v(\btht)} \phi \left( \frac{u(\btht) + v(\btht)}{v(\btht)} v(\bth) \right),
\end{eqnarray*}
where the equality holds iff $\bth=\btht$.

For the special case $\phi(z) = \log(z)$, if we assume $u(\bth) >0$ and $v(\bth) >0$, then we have
\begin{equation} \label{eqn2.NEM19}
   \log\left[u(\bth) + v(\bth) \right] \ge  \frac{u(\btht)}{u(\btht) + v(\btht)} \log \left[ u(\bth) \right] + \frac{v(\btht)}{u(\btht) + v(\btht)} \log \left[ v(\bth) \right] + \ct{0},
\end{equation}
or
\begin{equation} \label{eqn2.NEM20}
    - \log \left[ u(\bth) \right] \ge \frac{v(\btht)}{u(\btht)} \log \left[ v(\bth) \right] - \frac{u(\btht) + v(\btht)}{u(\btht)} \log\left[u(\bth) + v(\bth) \right] + \ct{01},
\end{equation}
where $\ct{0}$ and $\ct{01}$ are two constants free from $\bth$. Especially, when $u(\bth) + v(\bth) = C$ (i.e., a constant not depending on $\bth$), then (\ref{eqn2.NEM20}) becomes
\begin{equation} \label{eqn2.NEM21}
    - \log \left[ u(\bth) \right] \ge \frac{C - u(\btht)}{u(\btht)} \log \left[ C - u(\bth) \right] + \ct{02},
\end{equation}
where $C > u(\btht)$ and $\ct{02}$ is a constant free from $\bth$. The equality in (\ref{eqn2.NEM21}) holds iff $\bth=\btht$. The usefulness of (\ref{eqn2.NEM21}) can be found in Section 3.3.
\section{$\!\!\!\!\!\!\!\!$. Application 1: Problems solved for the first time}     

In this section, we present five models, where their MLEs were not solved by the EM-type algorithms previously. We shall show that we can solve them by the N-EM algorithm for the first time. In other words, the N-EM algorithm can solve some problems that cannot be solved by the EM-type algorithms.

\subsection{Gamma--integral distribution}  

A random variable $X$ is said to follow the gamma--integral distribution with parameter $\th > 0$, denoted by $X \sim \GI(\th)$, if its pdf is (Devroye 1986, p.191)
$$
   f(x| \th) = \int_x^{\infty} \frac{s^{\th -2} \e^{-s}} {\Ga(\th)} \rd s, \quad x>0.
$$
Let $\{X_i\}_{i=1}^n \iid \GI (\th)$ and $\Y{obs} = \{x_i\}_{i=1}^n$ denote the observed data with $x_i$ being the realization of $X_i$, then the log-likelihood function of $\th$ is
\begin{equation}  \label{eqn3.NEMTGL.19}
   \ell(\th | \Y{obs}) = \sum_{i=1}^n \log \left[\int_{x_i}^{\infty} \frac{s^{\th-2} \e^{-s}}{\Ga(\th)} \rd s\right] = \sum_{i=1}^n \log \left[ \int_{\bbS} h(s| x_i, \th) \rd s\right],
\end{equation}
which is of the form of (\ref{eqn2.1}) and $\bbS = (0, \infty) = \bbR_+$, where
$$
   h(s|x_i, \th) \teq  \frac{s^{\th-2} \e^{-s}}{\Ga(\th)}  \cdot I(s > x_i).
$$
By normalizing the $h(\cdot |x_i, \th)$ function, the N-step of the N-EM is to construct the ndf as
\begin{equation}  \label{eqn3.NEMTGL.20}
   g(s | x_i, \th) = \frac{h(s|x_i, \th)}{\int_{\bbS} h(s'|x_i, \th) \rd s'} = \frac{s^{\th-2} \e^{-s} \cdot I(s > x_i)}{\int_{x_i}^{\infty}(s')^{\th-2} \e^{-s'}\rd s'}, \quad s \in \bbS.
\end{equation}
The E-step of the N-EM algorithm is to calculate the $Q$-function as
\begin{eqnarray} \label{eqn3.NEMTGL.21}
   Q(\th | \tht) &=& \sum_{i=1}^n \int_{\bbS} \log \left[ h(s|x_i, \th) \right] \cdot g(s | x_i, \tht) \rd s \non \\ [2mm]
   &=& \sum_{i=1}^n \int_{\bbS} \big\{(\th-2) \log(s) - s - \log [\Ga(\th)] \big\} \cdot  g(s | x_i, \tht) \rd s \non \\ [2mm]
   &=& \ct{} + \th \sum_{i=1}^n a(x_i, \tht) - n \log [\Ga(\th)],
\end{eqnarray}
where $\ct{}$ is a constant free from $\th$, and
$$
   a(x_i, \tht) \teq E_{g}[ \log(S) |x_i, \tht] = \int_{\bbS} \log(s) \cdot g(s | x_i, \tht) \rd s
$$
is the expectation of $\log(S)$ with respect to the density $g(s | x_i, \tht)$. The M-step of the N-EM algorithm is to update $\tht$ with $\th^{(t+1)}$ by solving the following equation:
\begin{equation} \label{eqn3.NEMTGL.22}
   \frac{\rd \Ga(\th)/\rd \th} {\Ga(\th)} =  \frac{1}{n}\sum_{i=1}^n a(x_i, \tht).
\end{equation}

\subsection{Skew symmetric-normal distribution}  

The class of skew symmetric distributions has received much attention in recent years. In this subsection, we focus on parameter estimates in a general family of skew symmetric distributions which are generated by the \textit{cumulative distribution function} (cdf) of the normal distribution. There is no existing algorithm to estimate the parameters of skew-$t$-normal distribution and skew-Laplace-normal distribution, while the derivation of the EM algorithm for skew-normal-normal distribution proposed by Teimouri (2020) is extremely complex. We shall show that the proposed N-EM algorithm is efficient in calculating the MLEs of parameters in the class of skew symmetric distributions.

In the skew symmetric-normal distribution, $\de \in \bbR$ denotes the skewness parameter with $\de > 0$ if the pdf is right skewed and $\de < 0$ if it is left skewed. In this subsection we only consider the case of $\de>0$, while the case of $\de<0$ can be considered similarly. In addition, we use $N(\cdot|\mu,\si^2)$ to denote the pdf the $N(\mu,\si^2)$, i.e.,
$$
   N(x | \mu, \si^2) \teq \frac{1}{\sqrt{2 \pi} \si} \exp \left[-\frac{(x-\mu)^2}{2 \si^2}\right].
$$

\subsubsection{Skew normal--normal distribution}  

A random variable $X$ is said to follow the skew normal (also known as skew normal--normal) distribution with location parameter $\mu \in \bbR$, scale parameter $\si >0$ and skewness parameter $\de\in \bbR$, denoted by $X \sim \SN(\mu, \si^2, \de)$, if its pdf is (Nadarajah \& Kotz 2003, G\'{o}mez \Et 2007)
$$
   f(x| \bth) = \frac{2}{\sqrt{2 \pi} \si} \exp \left[-\frac{(x-\mu)^2} {2 \si^2}\right] \Phi \left(\frac{x-\mu}{\si}\de\right), \quad x \in \bbR,
$$
where $\bth \teq (\mu, \si^2, \de)^{\T}$ and $\Phi(\cdot)$ denotes the cdf of $N(0, 1)$.

Let $\{X_i\}_{i=1}^n \iid \SN (\mu, \si^2, \de)$ with $\de >0$ and $\Y{obs} = \{x_i\}_{i=1}^n$ denote the observed data with $x_i$ being the realization of $X_i$, then the log-likelihood function of $\bth$ is
\begin{eqnarray*}
   \ell_1(\bth | \Y{obs}) &=& c_1 - \frac{n \log (\si^2)}{2}-\sum_{i=1}^n \frac{(x_i-\mu)^2}{2 \si^2}+\sum_{i=1}^n \log \left[\Phi \left(\frac{x_i-\mu}{\si_*} \right)\right]  \\ [2mm]
   &=& c_1 - \frac{n \log (\si^2)}{2}-\sum_{i=1}^n \frac{(x_i-\mu)^2}{2 \si^2}+\sum_{i=1}^n \log \left[ \int_{-\infty}^{x_i} N(s|\mu, \si_*^2) \rd s \right]   \\ [2mm]
   &=& c_1 - \frac{n \log (\si^2)}{2}-\sum_{i=1}^n \frac{(x_i-\mu)^2}{2 \si^2} + \sum_{i=1}^n \log \left[ \int_{\bbS} h_1(s|x_i, \bth) \rd s \right],
\end{eqnarray*}
which is of the form of (\ref{eqn2.1}), where $c_1$ is a constant, $\si_* \teq \si/\de>0$, $\bbS = (-\infty, \infty) = \bbR$, and
\begin{equation}  \label{eqn3.NEMTGL.1}
   h_1(s|x_i, \bth) \teq  N(s |\mu, \si_*^2) \cdot I(s < x_i).
\end{equation}
By normalizing the $h_1(\cdot |x_i, \bth)$ function, the N-step of the N-EM is to construct the ndf as
\begin{equation}  \label{eqn3.NEMTGL.2}
   g_1(s | x_i, \bth) = \frac{h_1(s|x_i, \bth)}{\int_{\bbS} h_1(s'|x_i, \bth) \rd s'} = \frac{N(s |\mu, \si_*^2) \cdot I(s < x_i)}{\Phi((x_i-\mu) / \si_*)}, \quad s \in \bbS,
\end{equation}
which is the pdf of the univariate truncated normal distribution $\TN(\mu, \si_*^2; -\infty, x_i)$. The E-step is to calculate the $Q$-function as
\begin{eqnarray} \label{eqn3.NEMTGL.3}
   Q_1(\bth | \btht) &=& c_1 - \frac{n \log (\si^2)}{2}-\sum_{i=1}^n \frac{(x_i-\mu)^2}{2 \si^2} + \sum_{i=1}^n \int_{\bbS} \log \left[ h_1(s|x_i, \bth) \right] \cdot g_1(s | x_i, \btht) \rd s \non \\ [2mm]
   &=& \ct{11} - \frac{n \log (\si^2)}{2} - \sum_{i=1}^n \frac{(x_i-\mu)^2} {2\si^2}- \frac{n \log (\si_*^2)}{2} - \frac{n\si_*^{2(t)} + n(\mu - \mut)^2} {2\si_*^2} \non \\[2mm]
   & & +\; \sum_{i=1}^n \frac{\si_*^{2(t)}(x_i+\mut-2 \mu) \cdot g_1(x_i |x_i, \btht)}{2\si_*^2},
\end{eqnarray}
where $\ct{11}$ is a constant free from $\bth$, and the M-step is to update
\begin{equation} \label{eqn3.NEMTGL.4}
   \left\{\begin{array}{lcl}
      \mu^{(t+1)}  &=& \dis\frac{n\Bx +n\mut \de^{(t)2}-\si^{2(t)} \sum_{i=1}^n g_1(x_i | x_i, \btht)} {n+n\de^{(t)2}}, \\ [4mm]
      \si^{2(t+1)} &=& \dis\sum_{i=1}^n \frac{(x_i-\mu^{(t+1)})^2} {2n} + \frac{\si^{2(t)} + \de^{(t)2}(\mut-\mu^{(t+1)})^2}{2} - \dis\frac{\tau_1^{(t)}}{2n}, \\ [4mm]
      \de^{(t+1)2} &=& \dis\frac{n\si^{2(t+1)} \de^{(t)2}} {n[\si^{2(t)} + \de^{(t)2} (\mut-\mu^{(t+1)})^2]-\tau_1^{(t)}},
   \end{array} \right.
\end{equation}
where $\Bx =(1/n)\sum_{i=1}^n x_i$ and
\begin{equation} \label{eqn3.NEMTGL.5}
   \tau_1^{(t)} \teq \si^{2(t)}\sum_{i=1}^n(x_i + \mut - 2 \mu^{(t+1)}) \cdot g_1(x_i | x_i, \btht).
\end{equation}

\subsubsection{Skew $\ibt$-normal distribution}  

A random variable $X$ is said to follow a skew $t$-normal distribution with location parameter $\mu \in \bbR$, scale parameter $\si >0$, skewness parameter $\de \in \bbR$, and degrees of freedom $\nu >0$, denoted by $X \sim \StN(\mu, \si^2, \de, \nu)$, if it has the pdf (Nadarajah \& Kotz 2003, G\'{o}mez \Et 2007)
$$
   f(x|\bth, \nu)=\frac{2}{\si} t_{\nu} \left(\frac{x-\mu}{\si} \right) \Phi \left(\frac{x-\mu}{\si} \de \right), \quad x \in \bbR,
$$
where $\bth \teq (\mu, \si^2, \de)^{\T}$ and
$$
   t_{\nu}(x)=\frac{\Ga(\frac{\nu+1}{2})}{\Ga(\frac{\nu}{2}) \sqrt{\nu \pi}} \left( 1+\frac{x^2}{\nu} \right)^{-\frac{\nu+1}{2}}
$$
is the pdf of the standard Student-$t$ distribution with $\nu$ degrees of freedom. In general parameter estimations, the degree of freedom parameter $\nu$ is assumed to be known, the reason can be found in Lucas (1997).

Let $\{X_i\}_{i=1}^n \iid \StN(\mu, \si^2, \de, \nu)$ with $\de >0$ and $\Y{obs} = \{x_i\}_{i=1}^n$, then the log-likelihood function of $\bth$ is
\begin{eqnarray*}
   \ell_2(\bth | \Y{obs}) &=& c_2 - \frac{n \log (\si^2)}{2} -\frac{\nu+1}{2} \sum_{i=1}^n \log \left[1+\frac{(x_i-\mu)^2}{\nu \si^2} \right] + \sum_{i=1}^n \log \left[\Phi \left(\frac{x_i-\mu}{\si} \de\right)\right],
\end{eqnarray*}
where $c_2$ is a constant. In the framework of the N-EM algorithm, we can construct the same $h_1(s|x_i, \bth)$ given by (\ref{eqn3.NEMTGL.1}), the same
ndf $g_1(s|x_i,\bth)$ given by (\ref{eqn3.NEMTGL.2}), and the E-step is to calculate the $Q$-function as
\begin{eqnarray} \label{eqn3.NEMTGL.6}
   & & Q_2(\bth | \btht) \non \\ [2mm]
   &=& c_2 - \frac{n \log (\si^2)}{2} -\frac{\nu+1}{2} \sum_{i=1}^n \log \left[1+\frac{(x_i-\mu)^2}{\nu \si^2} \right] + \sum_{i=1}^n \int_{\bbS} \log \left[ h_1(s|x_i, \bth) \right] \cdot g_1(s | x_i, \btht) \rd s \non \\ [2mm]
   &=& \ct{21} - \frac{n \log (\si^2)}{2} -\frac{\nu+1}{2} \sum_{i=1}^n \log \left[1+\frac{(x_i-\mu)^2}{\nu \si^2} \right] - \frac{n \log (\si_*^2)}{2} - \frac{n\si_*^{2(t)} + n(\mu-\mut)^2}{2\si_*^2} \non \\[2mm]
   & & +\; \sum_{i=1}^n \frac{\si_*^{2(t)}(x_i+\mut-2 \mu) \cdot g_1(x_i |x_i, \btht)}{2\si_*^2}.
\end{eqnarray}
Because of the term $-\sum_{i=1}^n \log \left[1+(x_i-\mu)^2/(\nu \si^2) \right]$, directly maximizing the $Q$-function (\ref{eqn3.NEMTGL.6}) is not available in the sense that there are no closed-form solutions to $\bth^{(t+1)}$.

In (\ref{eqn2.NEM17}), let $\vp(z) = -\log (1+z)$ for $z\in \bbR_+ = (0, \infty)$, we have
\begin{equation} \label{eqn3.NEMTGL.7}
   -\log (1+z) \ge-\log (1+z_0)-\frac{z}{1+z_0}+\frac{z_0}{1+z_0}, \quad \forall \, z>0, \; z_0 >0.
\end{equation}
By setting
$$
   z = \frac{(x_i-\mu)^2}{\nu \si^2}, \quad z_0 = \frac{(x_i-\mut)^2}{\nu \si^{2(t)}},
$$
and plugging them into (\ref{eqn3.NEMTGL.7}), we obtain
$$
   -\log \left[1+\frac{(x_i-\mu)^2}{\nu \si^2}\right] \ge \ct{22} - \frac{(x_i-\mu)^2 /(\nu \si^2)}{1+\left( x_i-\mut \right)^2 /\left(\nu \si^{2(t)}\right)}.
$$
Then, we can construct a new $Q$-function
\begin{eqnarray} \label{eqn3.NEMTGL.8}
   Q_2^{\rm New}(\bth|\btht) &=& \ct{23} - \frac{n \log (\si^2)}{2} - \sum_{i=1}^n  \frac{w_i^{(t)}(x_i-\mu)^2} {2\si^2}  - \frac{n\log (\si_*^2)}{2} - \frac{n\si_*^{2(t)} + n(\mu - \mut)^2}{2 \si_*^2}  \non \\[2mm]
   & & +\; \sum_{i=1}^n \frac{\si_*^{2(t)}(x_i+\mut-2 \mu) g_1(x_i | x_i, \btht)}{2 \si_*^2},
\end{eqnarray}
which is the OLB of the $Q_2(\bth|\btht)$ function (\ref{eqn3.NEMTGL.6}) with tangent point being $\bth=\btht$, where
$$
   w_i^{(t)} \teq \frac{\nu+1} {\nu+\left(x_i-\mut\right)^2 / \si^{2(t)}}, \quad i=1, \ldots, n.
$$
By maximizing the $Q_2^{\rm New}(\bth|\btht)$ function defined in (\ref{eqn3.NEMTGL.8}), we update
\begin{equation} \label{eqn3.NEMTGL.9}
   \left\{\begin{array}{lcl}
      \mu^{(t+1)} &=& \tau_2^{(t)}\left[\dis\sum_{i=1}^n w_i^{(t)} x_i+n \de^{(t)2} \mut-\si^{2(t)} \sum_{i=1}^n g_1(x_i | x_i, \btht)\right], \\[4 mm]
      \si^{2(t+1)} &=& \dis \sum_{i=1}^n \frac{ w_i^{(t)}(x_i-\mu^{(t+1)})^2}{2 n}+\frac{\si^{2(t)}+\de^{(t)2}(\mut-\mu^{(t+1)})^2}{2} -\frac{\tau_1^{(t)}}{2n}, \\[4 mm]
      \de^{(t+1)2} &=& \dis\frac{n\si^{2(t+1)}\de^{(t)2}}{n[\si^{2(t)}+\de^{(t)2}(\mut-\mu^{(t+1)})^2]-\tau_1^{(t)}},
   \end{array}\right.
\end{equation}
where  $\tau_1^{(t)}$ is defined in (\ref{eqn3.NEMTGL.5}) and
\begin{equation}\label{eqn3.NEMTGL.10}
   \tau_2^{(t)} \;\hat{=}\; \left(\sum_{i=1}^n w^{(t)}_i+n \de^{(t)2}\right)^{-1}.
\end{equation}

\subsubsection{Skew Laplace--normal distribution}  

A random variable $X$ is said to follow a skew Laplace--normal distribution with location parameter $\mu \in \bbR$, scale parameter $\si >0$, and skewness parameter $\de \in \bbR$, denoted by $X \sim \SLN(\mu, \si^2, \de)$, if it has the pdf (Nadarajah \& Kotz 2003, G\'{o}mez \Et 2007)
$$
    f(x| \bth) = \frac{1}{\si} \exp \left( - \frac{|x-\mu|}{\si} \right) \Phi\left( \frac{x-\mu}{\si}\de\right), \quad x \in \bbR,
$$
where $\bth \teq (\mu, \si^2, \de)^{\T}$. Let $\{X_i\}_{i=1}^n \iid \SLN(\mu, \si^2, \de)$ with $\de >0$ and $\Y{obs} = \{x_i\}_{i=1}^n$, then the log-likelihood function of $\bth$ is
\begin{eqnarray*}
    \ell_3(\bth | \Y{obs}) = c_3-\frac{n \log (\si^2)}{2} - \sum_{i=1}^n \frac{|x_i-\mu|}{\si} + \sum_{i=1}^n\log \left[\Phi \left(\frac{x_i-\mu}{\si}\de \right)\right],
\end{eqnarray*}
where $c_3$ is a constant. The N-step of the N-EM algorithm is to construct the same $h_1(s|x_i, \bth)$ as (\ref{eqn3.NEMTGL.1}), the same ndf $g_1(s|x_i,\bth)$ as (\ref{eqn3.NEMTGL.2}), and the E-step is to calculate the $Q$-function as
\begin{eqnarray} \label{eqn3.NEMTGL.11}
   Q_3(\bth | \btht)
   &=&  \ct{31} - \frac{n \log (\si^2)}{2} - \sum_{i=1}^n \frac{|x_i-\mu|}{\si}  - \frac{n \log (\si_*^2)}{2} - \frac{n\si_*^{2(t)}+n(\mu - \mut)^2} {2\si_*^2} \non \\[2mm]
   & & +\; \sum_{i=1}^n \left[\frac{\si_*^{2(t)}(x_i+\mut-2 \mu) \cdot g_1(x_i |x_i, \btht)}{2\si_*^2}\right].
\end{eqnarray}
Similarly, because of the term $-\sum_{i=1}^n |x_i-\mu|/\si$, directly maximizing the $Q$-function (\ref{eqn3.NEMTGL.11}) is not available in the sense that there are no closed-form solutions to $\bth^{(t+1)}$.

In (\ref{eqn2.NEM17}), let $\vp(z) = z^2$ for $z \in \bbR = (-\infty, \infty)$, we have
\begin{equation} \label{eqn3.NEMTGL.12}
   z^2 \ge z_0^2 + 2(z - z_0)z_0 \qor -2zz_0 \ge - z^2 - z_0^2, \quad \forall \, z, \; z_0 \in \bbR.
\end{equation}
By setting
$$
   z = \left\{\frac{\sit}{|x_i-\mut|}  \cdot \frac{(x_i-\mu)^2}{2\si^2} \right\}^{1/2} \teq \left\{w_i^{(t)} \cdot \frac{(x_i-\mu)^2}{2\si^2}  \right\}^{1/2}, \quad
   z_0 =  \left(\frac{1}{2 w_i^{(t)}} \right)^{1/2},
$$
and plugging them into (\ref{eqn3.NEMTGL.12}), we obtain
$$
   -\sum_{i=1}^n \frac{|x_i-\mu|}{\si} \ge -\sum_{i=1}^n \frac{w_i^{(t)}(x_i-\mu)^2 } {2\si^2} + \ct{32}.
$$
Then, we can construct a new $Q$-function
\begin{eqnarray} \label{eqn3.NEMTGL.13}
   Q_3^{\rm New}(\bth|\btht) &=& \ct{33} - \frac{n \log (\si^2)}{2} - \sum_{i=1}^n  \frac{w_i^{(t)}(x_i-\mu)^2} {2\si^2}  - \frac{n\log (\si_*^2)}{2} - \frac{n\si_*^{2(t)} + n(\mu - \mut)^2}{2 \si_*^2}  \non \\[2mm]
   & & +\; \sum_{i=1}^n \frac{\si_*^{2(t)}(x_i+\mut-2 \mu) g_1(x_i | x_i, \btht)}{2 \si_*^2},
\end{eqnarray}
which is the OLB of the $Q_3(\bth|\btht)$ function (\ref{eqn3.NEMTGL.11}) with tangent point being $\bth=\btht$. By maximizing the $Q_3^{\rm New}(\bth|\btht)$ function (\ref{eqn3.NEMTGL.13}), we update
\begin{equation} \label{eqn3.NEMTGL3.14}
   \left\{\begin{array} {lcl}
      \mu^{(t+1)} &=& \tau_3^{(t)}\left[\dis\sum_{i=1}^n w_i^{(t)} x_i+n \de^{(t)2} \mut-\si^{2(t)} \sum_{i=1}^n g_1(x_i | x_i, \btht)\right], \\[6mm]
      \si^{2(t+1)} &=& \dis \sum_{i=1}^n \frac{ w_i^{(t)}(x_i-\mu^{(t+1)})^2}{2 n} + \frac{\si^{2(t)} + \de^{(t)2}(\mut-\mu^{(t+1)})^2}{2} - \dis\frac{\tau_1^{(t)}}{2n}, \\[6mm]
      \de^{(t+1)2} &=& \dis\frac{n\si^{2(t+1)}\de^{(t)2}}{n[\si^{2(t)}+\de^{(t)2}(\mut-\mu^{(t+1)})^2]-\tau_1^{(t)}},
    \end{array}\right.
\end{equation}
where $\tau_1^{(t)}$ is defined by ($\ref{eqn3.NEMTGL.5}$),
$$
   w_i^{(t)} \teq \frac{\sit}{|x_i-\mut|}, \quad i=1, \ldots, n, \qand
   \tau_3^{(t)} \teq \left(\sum_{i=1}^n w^{(t)}_i+n \de^{(t)2}\right)^{-1}.
$$

\subsection{Two--sided truncated normal distribution}  

A random variable $X$ is said to follow a normal distribution truncated on the interval $[a, b]$ ($a, b$ are known and $a < b$) with location parameter $\mu \in \bbR$ and scale parameter $\si >0$, denoted by $X \sim \TN(\mu, \si^2; [a,b])$ or $X \sim \TN(\mu, \si^2; a,b)$, if it has the pdf (Cohen 1949)
$$
   f(x | \bth; a, b)=\frac{N(x|\mu, \si^2)}{u(\bth)}  \cdot I(a \le x \le b),
$$
where $\bth \teq (\mu, \si^2)^{\T}$ and
$$
    u(\bth) = \int_a^b N(y|\mu, \si^2) \rd y  =   \Phi \left( \frac{b-\mu}{\si} \right) - \Phi \left( \frac{a-\mu}{\si} \right)
$$
is the normalizing constant.

Let $\{X_i\}_{i=1}^n \iid \TN(\mu, \si^2; [a,b])$ and $\Y{obs} = \{x_i\}_{i=1}^n$, then the log-likelihood function of $\bth$ is
\begin{eqnarray} \label{eqn3.NEMTGL3.15}
   \ell_4(\bth | \Y{obs}) = c_4 - \frac{n \log (\si^2)}{2} - \sum_{i=1}^n \frac{(x_i-\mu)^2}{2 \si^2} - n \log [ u(\bth)],
\end{eqnarray}
which is not of the form of (\ref{eqn2.1}). By applying (\ref{eqn2.NEM21}) with $C = 1$, we have
\begin{eqnarray*}
    -\log [ u(\bth)] &\srge{(\ref{eqn2.NEM21})} & \frac{1-u(\btht)} {u(\btht)} \log[1- u(\bth)] + \ct{41} \\ [2mm]
    & = & \st_1 \log\left[ \Phi \left(\frac{a-\mu}{\si} \right) + \Phi \left(\frac{-b+\mu}{\si}\right) \right]  + \ct{41} \\ [2mm]
    &\srge{(\ref{eqn2.NEM19})} & \st_1 \st_2 \log
    \left[\Phi\left(\frac{ a-\mu}{\si}\right)\right] + \st_1 (1-\st_2) \log \left[\Phi\left(\frac{-b+\mu}{\si}\right)\right] + \ct{42},
\end{eqnarray*}
where
\begin{eqnarray*}
   u(\btht) &=& \Phi \left(\frac{b - \mut}{\sit}\right) - \Phi \left(\frac{a-\mut}{\sit}\right), \\ [2mm]
   1-u(\btht) &=& \Phi \left(\frac{a-\mut}{\sit}\right) + \Phi \left(\frac{-b + \mut}{\sit}\right), \\ [2mm]
   \st_1 &\teq & \frac{1 - u(\btht)} {u(\btht)}, \qquad
   \st_2 \;\teq\; \frac{\Phi\left(\frac{a-\mut}{\sit}\right)} {1-u(\btht)}.
\end{eqnarray*}
So, we can construct a $Q$-function
\begin{eqnarray} \label{eqn3.NEMTGL3.16}
   Q_4(\bth | \btht) &=& \ct{43}-\frac{n \log (\si^2)}{2} - \sum_{i=1}^n \frac{(x_i-\mu)^2}{2\si^2}+ n \st_1 \st_2  \log \left[\Phi \left(\frac{a-\mu}{\si}\right)\right] \non \\ [2mm]
   & & +\; n s_1^{(t)} (1-\st_2 ) \log \left[\Phi \left(\frac{-b+\mu}{\si}\right) \right].
\end{eqnarray}

In the N-EM framework, we can construct the same ndf $g_1(s|x_i,\bth)$ with $\de=1$ as (\ref{eqn3.NEMTGL.2}), and the E-step is to calculate a new $Q$-function as
\begin{eqnarray} \label{eqn3.NEMTGL3.17}
   Q_4^{\rm New}(\bth | \btht) &=& \ct{44} - \frac{n(1+s_1 ^{(t)}) \log (\si^2)}{2}-\sum_{i=1}^n \frac{(x_i-\mu)^2}{2 \si^2} -n \st_1  \frac{\si^{2(t)}+(\mu-\mut)^2}{2 \si^2} \non \\ [2mm]
   & & +\; n\st_1 \st_2 \frac{\si^{2(t)} (a+ \mut -2\mu) g_1(a|a, \btht)} {2 \si^2}   \non \\ [2mm]
   & & -\; n \st_1 (1-\st_2 ) \frac{\si^{2(t)}(\mut-2 \mu+b) g_1(-b | -b, \btht)}{2 \si^2},
\end{eqnarray}
which is the OLB of the $Q_4(\bth|\btht)$ function (\ref{eqn3.NEMTGL3.16}) with tangent point being $\bth=\btht$. By maximizing the $Q_4^{\rm New}(\bth|\btht)$ function (\ref{eqn3.NEMTGL3.17}), we update
\begin{equation} \label{eqn3.NEMTGL3.18}
   \left\{ \begin{array}{lcl}
      \mu^{(t+1)} &=& \dis\frac{\Bx + \st_1 \left\{\mut - \si^{2(t)} \left[\st_2 g_1(a | a, \btht)-(1-\st_2 )g_1(-b | -b, \btht)\right]\right\}}{1 + \st_1}, \\[6mm]
      \si^{2(t+1)} &=& \dis\frac{ \sum_{i=1}^n (x_i- \mu^{(t+1)} )^2 + n \st_1 \si^{2(t)} \tau_4^{(t)}}{n(1 + \st_1)},
   \end{array} \right.
\end{equation}
where
\begin{eqnarray*}
   \tau_4^{(t)} &\teq& 1 +  \frac{(\mut-\mu^{(t+1)})^2} {\si^{2(t)}} - (a+\mut-2\mu^{(t+1)})\st_2  g_1(a | a, \btht) \\[2mm]
   & & +\; (\mut-2\mu^{(t+1)} + b)(1-\st_2 )g_1(-b | -b, \btht).
\end{eqnarray*}
\section{$\!\!\!\!\!\!\!\!$. Application 2: Three kinds of existing EM algorithms \\ \hspace*{-0.38cm} are unified in the N-EM framework}

The most difficult issue associated with the EM-type algorithms is the introduction of latent variables/vectors or the data augmentation (Wang \& Tanner 1987). For example, although EM-type algorithms can be applied to mixture problems, \textit{stochastic representation} (SR) methods and group problems, designing such EM-type algorithms is often case by case. In this section, we shall illustrate that these EM-type algorithms can be unified in the proposed N-EM framework, avoiding the data augmentation and hence resulting in a quite straightforward algorithm construction.

\subsection{Mixture problems}  

Many distributions or models can be generated by a basic distribution mixing with other distribution(s), for example, the Student-$t$ distribution is a normal mixture with a gamma distribution. The EM algorithm can be constructed for finding MLEs of parameters in mixture models with mixing random variables as the latent variables. However, the N-EM algorithm can provide a unified framework to systematically deal with the mixture problems. In this subsection, we present three examples about the multivariate Laplace distribution and its extensions.

\subsubsection{Multivariate Laplace distribution}  

As a multivariate normal mixture with the standard exponential as the mixing distribution, the multivariate (symmetric) Laplace is also an important distribution in statistics just like the multivariate (symmetric) $t$-distribution. In this paper, the pdf of $U \sim \Exponential(\be)$ is denoted by $\be \e^{-\be u}\cdot I(u > 0)$. Let
\begin{equation}   \label{eqn3.NEMTGL3.23}
   U\sim \Exponential(1)  \qand \bx |U \sim N_d(\bmu, 2U \bSi),
\end{equation}
then $\bx = (X_1,\ldots, X_d)^{\T}$ is said to follow the $d$-dimensional (symmetric) Laplace distribution, denoted by $\bx \sim \Laplace_d(\bmu,\bSi)$, where $\bmu = (\mu_1, \ldots, \mu_d)^{\T} \in \bbR^d$ is the mean parameter vector and $\bSi=(\si_{jj'}) >0$ is the scale parameter matrix. From the mixture representation (\ref{eqn3.NEMTGL3.23}), we obtain an equivalent SR: $\bx = \bmu + \sqrt{2U}\by$, where $\by \sim N_d(\0, \bSi)$ and $U \inde \by$. From this SR, it is easy to obtain $E(\bx) = \bmu$ and $\Var(\bx) = 2\bSi$. Let $\bth=\{\bmu, \bSi\}$, then the pdf of $\bx$ is
$$
   f_{\bx}(\ibx|\bth) = (4\pi)^{-d/2}|\bSi|^{-1/2}
   \int_0^\infty  h_1(u|\ibx, \bth) \rd u,
$$
where
$$
   h_1(u|\ibx, \bth) = u^{-d/2} \exp \left[-\frac{(\ibx-\bmu)^{\T} \bSi^{-1}(\ibx-\bmu)}{4u} - u \right].
$$

Let $\{\bx_i\}_{i=1}^n \iid \Laplace_d(\bmu,\bSi)$ and $\Y{obs} = \{\ibx_i\}_{i=1}^n$ denote the observed data with $\ibx_i$ being the realization of $\bx_i$, then the log-likelihood function of $\bth$ is given by
$$
   \ell_1(\bth|\Y{obs}) = c_1 -\frac{n}{2}\log|\bSi|+ \sum_{i=1}^n \log\left[\int_0^\infty h_1(u|\ibx_i,\bth)\rd u \right],
$$
which is of the form of (\ref{eqn2.1}). By normalizing the $h_1(u|\ibx_i,\bth)$ function, the N-step of the N-EM algorithm is to construct the ndf as
$$
   g_1(u|\ibx_i, \bth) = \frac{h_1(u|\ibx_i, \bth)} {\int_0^\infty h_1(u'|\ibx_i,\bth) \rd u'}, \quad u>0,
$$
the E-step is to calcualte the $Q$-function as
\begin{eqnarray} \label{eqn3.NEMTGL3.24}
    Q_1(\bth|\btht) = \ct{11} -\frac{n}{2} \log|\bSi| - \sum_{i=1}^n \frac{(\ibx_i -\bmu)^{\T} \bSi^{-1} (\ibx_i - \bmu)}{4} a_1(\ibx_i, \btht),
\end{eqnarray}
where
$$
   a_1(\ibx_i, \btht) \teq E_{g_1}(U^{-1} |\ibx_i, \btht) = \int_0^\infty u^{-1} \cdot g_1(u|\ibx_i,\btht) \rd u,
$$
and the M-step is to update
\begin{equation} \label{eqn3.NEMTGL3.25}
   \left\{ \begin{array}{lcl}
      \bmu^{(t+1)} &=& \left[ \dis\sum_{i = 1}^n  a_1(\ibx_i, \btht) \right]^{-1} \dis\sum_{i = 1}^n  a_1(\ibx_i, \btht)\ibx_i, \\ [6mm]
      \bSi^{(t+1)} &=& \dis\frac{1}{2n} \sum_{i = 1}^n a_1(\ibx_i, \btht) (\ibx_i -\bmu^{(t+1)})(\ibx_i -\bmu^{(t+1)})^{\T}.
   \end{array} \right.
\end{equation}

\subsubsection{Multivariate asymmetric Laplace distribution}  

The multivariate Laplace distribution has been extended to multivariate asymmetric Laplace distribution. Let
\begin{equation}   \label{eqn3.NEMTGL3.26}
   U\sim \Exponential(1)  \qand \bx |U \sim N_d(\bmu + U\bmu^*, 2U \bSi),
\end{equation}
then $\bx = (X_1,\ldots, X_d)^{\T}$ is said to follow the $d$-dimensional  asymmetric Laplace distribution, denoted by $\bx \sim \ALaplace_d(\bmu, \bmu^*, \bSi)$, where $\bmu = (\mu_1, \ldots, \mu_d)^{\T} \in \bbR^d$ is the symmetric  location parametric vector, $\bmu^* = (\mu_1^*, \ldots, \mu_d^*)^{\T} \in \bbR^d$ is the asymmetric location parameter vector, and $\bSi=(\si_{jj'}) >0$ is the scale parameter matrix. From the mixture representation (\ref{eqn3.NEMTGL3.26}), we obtain an equivalent SR: $\bx = \bmu + U\bmu^* + \sqrt{2U}\by$, where $\by \sim N_d(\0, \bSi)$ and $U \inde \by$. From this SR, it is easy to obtain
$E(\bx) = \bmu + \bmu^*$ and $\Var(\bx) = \bmu^{*\T} \bmu^* + 2\bSi$. Let $\bth=\{\bmu, \bmu^*, \bSi\}$, then the pdf of $\bx$ is
$$
   f_{\bx}(\ibx|\bth) = (4\pi)^{-d/2}|\bSi|^{-1/2} \int_0^\infty h_2(u|\ibx, \bth) \rd u,
$$
where
$$
   h_2(u|\ibx, \bth) = u^{-d/2} \exp\left[-\frac{(\ibx - \bmu - u \bmu^*)^{\T} \bSi^{-1} (\ibx - \bmu - u \bmu^*)}{4u} - u \right].
$$

Let $\{\bx_i\}_{i=1}^n \iid \ALaplace_d(\bmu,\bmu^*,\bSi)$ and $\Y{obs} = \{\ibx_i\}_{i=1}^n$ denote the observed data with $\ibx_i$ being the realization of $\bx_i$, then the log-likelihood function of $\bth$ is given by
$$
   \ell_2(\bth|\Y{obs}) = c_2 -\frac{n}{2}\log|\bSi|+ \sum_{i=1}^n \log\left[\int_0^\infty h_2(u|\ibx_i,\bth)\rd u \right],
$$
which is of the form of (\ref{eqn2.1}). By normalizing the $h_2(u|\ibx_i,\bth)$ function, the N-step of the N-EM algorithm is to construct the ndf as
$$
   g_2(u|\ibx_i, \bth) = \frac{h_2(u|\ibx_i, \bth)} {\int_0^\infty h_2(u'|\ibx_i,\bth) \rd u'}, \quad u>0.
$$
The E-step is to calculate the $Q$-function as
\begin{eqnarray} \label{eqn3.NEMTGL3.27}
   & & Q_2(\bth|\btht) \non \\[2mm]
   &=& c_2 - \frac{n}{2}\log|\bSi| + \sum_{i=1}^n \int_0^\infty \log\left[ h_2(u|\ibx_i, \bth) \right] \cdot g_2(u|\ibx_i, \btht) \rd u \non \\[2mm]
   &=& \ct{21} -\frac{n}{2}\log|\bSi| - \sum_{i=1}^n \int_0^\infty \frac{(\ibx_i - \bmu - u \bmu^*)^{\T} \bSi^{-1} (\ibx_i - \bmu - u \bmu^*)}{4u}
      \cdot g_2(u|\ibx_i,\btht) \rd u \non \\[2mm]
   &=& \ct{21} -\frac{n}{2} \log|\bSi| - \frac{1}{4} \sum_{i=1}^n  \Big[ (\ibx_i - \bmu)^{\T} \bSi^{-1} (\ibx_i -\bmu) \cdot a_2(\ibx_i, \btht)  \non \\[2mm]
   & & +\; \bmu^{*\T}\bSi^{-1}\bmu^* \cdot b_2(\ibx_i, \btht) -2 \bmu^{*\T} \bSi^{-1} (\ibx_i -\bmu) \Big],
\end{eqnarray}
where
\begin{eqnarray*}
   a_2(\ibx_i, \btht) &\teq & E_{g_2}(U^{-1} |\ibx_i, \btht) = \int_0^{\infty} u^{-1} \cdot g_2(u|\ibx_i, \btht)\rd u  \qand \\[2mm]
   b_2(\ibx_i, \btht) &\teq & E_{g_2}(U |\ibx_i, \btht) = \int_0^{\infty} u \cdot g_2(u | \ibx_i, \btht)\rd u.
\end{eqnarray*}
The M-step is to update
\begin{equation} \label{eqn3.NEMTGL3.28}
   \left\{ \begin{array}{lcl}
      \bmu^{(t+1)} &=& \left[\dis\sum_{i=1}^n a_2(\ibx_i, \btht) \right]^{-1} \left[\dis\sum_{i=1}^n a_2(\ibx_i, \btht) \ibx_i - n\bmu^{*(t)} \right] \\ [4mm]
      \bmu^{*(t+1)}&=& \left[\dis\sum_{i = 1}^n b_2(\ibx_i, \btht) \right]^{-1} \left( \dis\sum_{i=1}^n \ibx_i- n\bmut \right), \\ [4mm]
      \bSi^{(t+1)} &=& \dis\frac{1}{2n} \sum_{i = 1}^n \Big[a_2(\ibx_i, \btht) (\ibx_i -\bmut) (\ibx_i-\bmut)^{\T} - (\ibx_i-\bmut) (\bmu^{*(t)})^{\T} \\ [4mm]
      & & -\; \bmu^{*(t)} (\ibx_i-\bmut)^{\T} + b_2(\ibx_i, \btht) \bmu^{*(t)} (\bmu^{*(t)})^{\T} \Big].
   \end{array} \right.
\end{equation}

\subsubsection{Type \II \ multivariate Laplace distribution}  

The multivariate Laplace distribution has been extended to Type \II \ multivariate Laplace distribution, whose random components could have different value for its own mixing variate and are correlated only through the
dependence structure of the normal random vector. Let
\begin{eqnarray} \label{eqn3.NEMTGL3.29}
   U_1, \ldots, U_d \iid \Exp(1) \qand  \bx \mid (\bu = \ibu) \sim N_{d}\left(\bmu, 2 \ibU^{1 / 2} \bSi \ibU^{1 / 2}\right)
\end{eqnarray}
where $\ibu= (u_1, \ldots, u_d)^{\T}$ is the realization of $\bu = (U_1, \ldots, U_d)^{\T}$,
$$
   \ibU^{1/2} = \diag(\sqrt{u_1}, \ldots, \sqrt{u_d}) = \diag(\ibu^{1/2} )
$$
is the realization of $\bU^{1/2} =\diag(\sqrt{U_1}, \ldots, \sqrt{U_d}) = \diag (\bu^{1 / 2})$, then $\bx = (X_1,\ldots, X_d)^{\T}$ is said to follow the $d$-dimensional  Type \II \ Laplace distribution, denoted by  $\bx \sim \Laplace_d^{(\rm \II)}(\bmu, \bSi)$, and $\bmu = (\mu_1, \ldots, \mu_d)^{\T} \in \bbR^d$ is the location parametric vector, and $\bSi=(\si_{jj'}) >0$ is the scale parameter matrix. From the mixture representation (\ref{eqn3.NEMTGL3.29}), we obtain an equivalent SR:
\begin{equation*}
   \bx= \threev{\mu_1}{\vdots}{\mu_d} + \sqrt{2} \threethree{\sqrt{U_1}}{\cdots}{0} {\vdots}{\ddots}{\vdots} {0}{\cdots}{\sqrt{U_d}} \threev{Y_1}{\vdots}{Y_d}
   = \bmu + \sqrt{2} \bU^{1/2} \by,
\end{equation*}
where $\by=(Y_1, \ldots, Y_d)^{\T} \sim N_d(\0, \bSi)$ and $\bu=(U_1, \ldots, U_d)^{\T} \inde \by$. From this SR, it is easy to obtain $E(\bx) = \bmu$ and $\Var(\bx) =  2\bSi$. Let $\bth=\{\bmu,  \bSi\}$, then the pdf of $\bx$ is
$$
   f_{\bx}(\ibx|\bth) =  (4\pi)^{-d/2}|\bSi|^{-1/2} \int_{\bbR_+^d} h_3(\ibu | \ibx, \bth) \rd \ibu,
$$
where
$$
   h_3(\ibu | \ibx, \bth) \teq \left( \prod_{j=1}^d u_j^{-\frac{1}{2}} \right) \exp \left[-\frac{(\ibu^{-\frac{1}{2}})^{\T} (\ibX^* \bSi^{-1}\ibX^*) \ibu^{-\frac{1}{2}}}{4} - \1^{\T}\ibu \right] \qand \ibX^* \teq \diag(\ibx-\bmu).
$$

Let $\{\bx_i\}_{i=1}^n \iid \Laplace_d^{(\rm \II)}(\bmu, \bSi)$ and $\Y{obs}= \{\ibx_i\}_{i=1}^n$ with $\ibx_i$ being the realization of $\bx_i$, then the log-likelihood function of $\bth$ is given by
$$
   \ell_3(\bth|\Y{obs}) = c_3  -\frac{n}{2}\log|\bSi|+ \sum_{i=1}^n \log\left[\int_{\bbR_+^d} h_3(\ibu | \ibx_i, \bth)\rd \ibu\right].
$$
where
which is of the form of (\ref{eqn2.1}). By normalizing the $h_3(\cdot |\ibx_i,\bth)$ function, the N-step of the N-EM algorithm is to construct the ndf as
$$
   g_3(\ibu|\ibx_i, \bth) = \frac{ h_3(\ibu | \ibx_i, \bth)} {\int_{\bbR_+^d}  h_3(\ibu' | \ibx_i, \bth)\rd \ibu'}, \quad \ibu \in \bbR_+^d.
$$
The E-step of the N-EM algorithm is to calculate the $Q$-function as
\begin{eqnarray} \label{eqn3.2.30}
   & & Q_3(\bth|\btht) \non \\[2mm]
   &\teq& c_3  - \frac{n}{2}\log|\bSi|+ \sum_{i=1}^n \int_{\bbR_+^d} \log[ h_3(\ibu | \ibx_i, \bth)] \cdot g_3(\ibu|\ibx_i, \btht)  \rd \ibu \non \\[2mm]
   &=& c_{31}^{(t)} - \frac{n}{2}\log|\bSi| - \sum_{i=1}^n \int_{\bbR_+^d} \left[\frac{(\ibu^{-1/2})^{\T}(\ibX_i^* \bSi^{-1}\ibX_i^*)\ibu^{-1/2}}{4} \right] g_3(\ibu|\ibx_i,\btht) \rd \ibu. \qquad
\end{eqnarray}
The M-step is to update
\begin{equation} \label{eqn3.2.31}
   \left\{ \begin{array}{lcl}
      \bmu^{(t+1)} &=& \left[\dis \sum_{i=1}^n  \ibU_i^{-\frac{1}{2}} \bSi^{-1}\ibU_i^{-\frac{1}{2}} \right]^{-1} \left[\dis\sum_{i = 1}^n  \ibU_i^{-\frac{1}{2}}\bSi^{-1}\ibU_i^{-\frac{1}{2}}\ibx_i\right], \\ [6mm]
      \bSi^{(t+1)} &=& \dis \frac{1}{2 n} \sum_{i=1}^n \ibU_i^{-\frac{1}{2}} (\ibx_i -\bmu) (\ibx_i -\bmu)^{\T} \ibU_i^{-\frac{1}{2}},
   \end{array} \right.
\end{equation}
while replacing $u_{ij}^{-\frac{1}{2}}u_{ik}^{-\frac{1}{2}}$ in (\ref{eqn3.2.31}) by $\int_{\bbR_+^d} u_{ij}^{-\frac{1}{2}} u_{ik}^{-\frac{1}{2}} \cdot g(\ibu|\ibx_i,\btht) \rd \ibu$, where
$$
   \ibU_i^{-\frac{1}{2}} = \diag \left(u_{i1}^{-\frac{1}{2}}, \ldots, u_{id}^{-\frac{1}{2}}\right).
$$

\subsection{Stochastic representation methods}  

A positive random variable $Y$ is said to follow the \textit{inverse Gaussian} (IG) distribution with location parameter $\mu \, (> 0)$ and shape parameter $\la \, (>0)$, if its pdf is
$$
   \f{IG}(y| \mu, \la) = \sqrt{\frac{\la}{2\pi}} \; y^{-\frac{3}{2}} \exp \left[ -\frac{\la(y-\mu)^2} {2\mu^2 y} \right], \quad y>0,
$$
denoted by $Y \sim \IG(\mu, \la)$. Tweedie (1957) obtained $E(Y)=\mu$ and $\Var(Y)=\mu^3/\la$. Especially, when $\la=\mu^2$, the $\IG(\mu, \la)$ reduces to the equal-dispersion IG distribution $\IG(\mu, \mu^2)$ with $E(Y)=\Var(Y)$.

By using two independent IG variates with equal-dispersion, Liu \Et (2021) proposed a \textit{proportional inverse Gaussian} (PIG) distribution. Let $Y_k \sim \IG(\th_k, \th_k^2)$ for $k=1,2$ and $Y_1 \inde Y_2$. Define
\begin{equation} \label{eqn3.2.35}
   X = \frac{Y_1}{Y_1 +Y_2},
\end{equation}
then $X$ is said to follow the PIG distribution with parameters $\th_1 >0$ and $\th _2 >0$, denoted by $X \sim \PIG(\bth)$ with $\bth=(\th_1, \th_2 )^{\T}$. Its pdf is
$$
   \f{PIG}(x | \bth) = \frac{\th_1\th_2}{2\pi} \e^{\th_1 +\th_2} [x(1-x)]^{-\frac{3}{2}} \int_0^{\infty} h_4(s | x, \bth) \rd s, \quad 0<x<1,
$$
where
$$
   h_4(s | x, \bth)  \teq s^{-2} \exp \left\{-\frac{1}{2} \left[s + \left(\frac{\th_1^2}{x} + \frac{\th_2^2}{1-x} \right) \frac{1}{s} \right] \right\}.
$$

Let $\{X_i\}_{i=1}^n \iid \PIG (\bth)$ and $\Y{obs} = \{x_i\}_{i=1}^n$ with $x_i$ being the realization of $X_i$. To calculate the MLEs of $\bth$, one can try to create an EM algorithm by treating $(Y_1, Y_2)$ as latent variables via the SR (\ref{eqn3.2.35}). However, it is not an easy task to derive the conditional predictive distributions of $Y_k |(X=x)$ for $k=1,2$. To avoid this difficulty, Liu \Et (2021) turned to develop an MM algorithm, which is, in fact, an N-EM algorithm.

Note that the observed-data log-likelihood function of the parameter vector $\bth$ is
$$
   \ell_4(\bth | \Y{obs}) = c_4 + n \log(\th_1) + n\log(\th_2) + n(\th_1 + \th_2) + \sum_{i=1}^n \log \left[\int\nolimits_0^{\infty} h_4(s | x_i, \bth) \rd s \right],
$$
where $c_4$ is a constant free from $\bth$. By normalizing the $h_4(\cdot |x_i, \bth)$ function, the N-step of the N-EM algorithm is to construct the ndf as
$$
   g_4(s| x_i, \bth) = \frac{h_4(s | x_i, \bth)}{\int_0^{\infty} h_4(s' | x_i, \bth) \rd s'}, \quad s>0.
$$
The E-step of the N-EM algorithm is to calculate the $Q$-function as
\begin{equation} \label{eqn3.2.36}
   Q_4(\bth | \btht) = \ct{41} + n \log (\th_1 )+n \log (\th_2 )+n(\th_1 +\th_2 )  - \frac{\th_1^2 d_1(\btht) + \th_2^2 d_2(\btht)}{2}.
\end{equation}
where $\ct{41}$ is a constant free from $\bth$,
\begin{eqnarray*}
   d_1(\btht) &=& \sum_{i=1}^n \frac{b_4(x_i, \btht)}{x_i}, \qquad d_2(\btht) \;=\; \sum_{i=1}^n \frac{b_4(x_i, \btht)}{1-x_i} \qand \\ [2mm]
   b_4(x_i, \btht) &\teq &  E_{g_4}(S^{-1} | x_i, \btht ) \;=\; \int_0^{\infty} \frac{g_4(s | x_i, \btht)}{s} \rd s.
\end{eqnarray*}
The M-step is to update
\begin{eqnarray} \label{eqn3.2.37}
   \th_k^{(t+1)} = \frac{n+ \sqrt{n^2+4 n d_k(\btht) }} {2 d_k(\btht) }, \quad k=1,2.
\end{eqnarray}

\subsection{Group problems}  

Let $X$ be a continuous random variable with density function $f(x|\bth)$. There are $n \,(=\sum_{j=1}^r n_j)$ independent observations from $X$, but individual observations $\{X_i\}_{i=1}^n$ are not recorded and only the numbers $\{n_j\}_{j=1}^r$ falling in the mutually exclusive intervals $(t_{j-1}, t_j]$ for $j=1, \ldots, r$ are recorded and $t_0, t_r$ could be $-\infty$ and $\infty$. Thus, the observed data are $\Y{obs}=\{n_j\}_{j=1}^r \cup \{t_0, t_1, \ldots, t_r\}$ and the observed-data log-likelihood function is
$$
   \ell_5(\bth | \Y{obs })= c_5 + \sum_{j=1}^r n_j \log \left[p_j(\bth) \right]
   = c_5 + \sum_{j=1}^r n_j \log \left[\int_{-\infty}^{\infty} h_j(x|\bth; t_{j-1}, t_j) \rd x \right],
$$
where $c_5$ is a constant free from $\bth$, $p_j(\bth)= \Pr(t_{j-1}<X \le t_j) = \int_{t_{j-1}}^{t_j} f(x|\bth) \rd x$, and
$$
   h_j(x |\bth; t_{j-1}, t_j) = f(x|\bth) \cdot I(t_{j-1} < x \le t_j), \quad j=1, \ldots, r.
$$

By normalizing the $h_j(\cdot |\bth; t_{j-1}, t_j)$ function, the N-step of the N-EM algorithm is to construct the ndf as
$$
   g_j(x|\bth; t_{j-1}, t_j) = \frac{h_j(x|\bth; t_{j-1}, t_j)} {\int_{-\infty}^{\infty} h_j(y|\bth; t_{j-1}, t_j) \rd y} = \frac{f(x|\bth) \cdot I(t_{j-1} < x \le t_j)} {\int_{t_{j-1}}^{t_j} f(y|\bth) \rd y }, \quad  x\in \bbR,
$$
which is the original density $f(x|\bth)$ truncated on the interval $(t_{j-1},  t_j]$. The E-step of the N-EM algorithm is to calculate the $Q$-function as
$$
   Q(\bth |\btht)= \sum_{j=1}^r n_j Q_j (\bth |\btht),
$$
where
$$
   Q_j (\bth |\btht) = \int_{-\infty}^{\infty} \log [h_j(x|\bth; t_{j-1}, t_j)] \cdot g_j (x |\btht; t_{j-1}, t_j) \rd x.
$$

Furthermore, let $\bth=(\al, \be)^{\T}$ and
$$
   f(x | \bth) =  \frac{\be^{\al}}{\Ga(\al)}  x^{\al-1} \e^{-\be x}, \quad \al>0, \quad \be>0, \quad x \in \bbR_+,
$$
we have $\log [f(x |\bth)] = \al \log (\be) - \log [\Ga(\al)] + (\al - 1) \log(x) - \be x$. Thus,
\begin{eqnarray*}
   Q_j (\bth |\btht) &=& \int_{-\infty}^{\infty} \Big\{ \al \log (\be) - \log [\Ga(\al)] + (\al - 1) \log(x) - \be x \Big\} \cdot g_j (x |\btht;  t_{j-1}, t_j) \rd x  \\ [2mm]
   &=& \al \log(\be) - \log[\Ga(\al)] + (\al - 1) a_j(\btht) - \be b_j(\btht),
\end{eqnarray*}
where
\begin{eqnarray} \label{eqn3.2.32}
   a_j(\btht) &\teq& E_{g_j}[\log(X) | \btht; t_{j-1}, t_j] \;=  \int_{-\infty}^{\infty} \log(x)  \cdot g_j (x |\btht; t_{j-1}, t_j) \rd x, \non \\ [2mm]
   b_j(\btht) &\teq& E_{g_j} (X | \btht; t_{j-1}, t_j) \;=   \int_{-\infty}^{\infty} x \cdot g_j (x|\btht; t_{j-1}, t_j) \rd x, \qand \non \\ [2mm]
   g_j(x |\bth; t_{j-1}, t_j) &=& \frac{x^{\al-1} \e^{-\be x} } {  \int_{t_{j-1}}^{t_j } y^{\al-1} \e^{-\be y} \rd y} \cdot I(t_{j-1} < x \le t_j), \quad t_0 \teq 0, \quad t_r\teq \infty.
 \end{eqnarray}
Therefore, the $Q$-function is
$$
   Q(\bth |\btht)  =  n \al \log(\be) - n\log[\Ga(\al)] + (\al - 1) \sum_{j=1}^r n_j  a_j(\btht) - \be \sum_{j=1}^r n_j  b_j(\btht),
$$
and the M-step is to updates
\begin{equation} \label{eqn3.2.34}
   \frac{\rd \Ga(\al)/ \rd \al} {\Ga(\al)} \bigg|_{\al=\al^{(t+1)}} = \log (\be^{(t)}) + \frac{1}{n}\sum_{j=1}^r n_j  a_j(\btht) \qand
   \be^{(t+1)} = \frac{n\al^{(t+1)}} {\sum_{j=1}^r  n_j  b_j(\btht)}.
\end{equation}
\section{$\!\!\!\!\!\!\!\!$. Numerical experiments}  

We conduct numerical experiments to assess the practical performances of the proposed N-EM algorithms for examples discussed in Section 4. The simulation was coded in \textsf{R} and run on a desktop in Intel(\textsf{R}) Core(TM) i9-9980 with CPU 2.40 GHz, and the stopping criterion is set to be
$$
    \frac{|\ell(\bth^{(t+1)}| \Y{obs}) - \ell(\btht | \Y{obs} ) |}{|\ell(\btht | \Y{obs} )|} < 10^{-6}.
$$
The first three examples in Section 4.1 includes skew-normal-normal distribution (SNND), skew-$t$-normal distribution (STND), two-side truncted normal distribution (TTND). The next five examples in Section 4.2 are multivariate Laplace-family distribution (consisting of multivariate Laplace distribution (MLD), multivariate asymmetric Laplace distribution (MALD), Type \II \ multivariate Laplace distribution (Type \II \ MLD), linear mixed model (LMM) and group data problems of the exponential distribution (GPED). The last example in Section 4.3 is the folded normal distribution (FND).

We generate $R$ replications from various parameter settings as well as sample size and execute the N-EM algorithm we proposed. The sample size ($n$), the average number of iteration ($K$), the average time of running in seconds (Time), the final likelihood values ($L$), the mean squared error (MSE) and the coverage probability based on the percentile bootstrap CI (CP) are summarized in Table 1, 2, 3 for the examples in Section 4.1, 4.2 and 4.3.

The MSE \RED{is} be defined as
$$
    \frac{1}{R}\sum_{r=1}^R \frac{||\hat{\bth}^{(r)}-\bth||^2}{q},
$$
where $\bth$ denotes the true value and $q$ is the dimension of $\bth$.

As a useful tool, the bootstrap method for calculating a bootstrap \textit{confidence interval} (CI) of an arbitrary function of $\bth$, say $\vth = h(\bth)$, is given as follows:
\begin{itemize}
  \item[\hspace*{-0.03cm} Step 1:] We calculate the MLE of $\bth$, denoted by $\hbth$, which can be obtained via the N-EM algorithm. Thus, $\hvth = h(\hbth)$ is the MLE of $\vth$.

  \item[\hspace*{-0.03cm} Step 2:] Based on $\hbth$, we can generate R independent identically distributed replications  $W_1^* = w_1^*, \ldots , W_R^* = w_R^* $. Then we can calculate $\hbth^*$ based on $\Y{obs}^*= \{w_1^*, \ldots, w_R^* \}$ through the N-EM algorithm as in Step 1 and obtain a bootstrap replication $\hvth^* = h(\hbth^*)$.

  \item[\hspace*{-0.03cm} Step 3:] Independently repeating Step 2 $G$ times, we can obtain $G$ bootstrap replications $\{\hvth_g^*\}_{g=1}^G$.

  \item[\hspace*{-0.03cm} Step 4:] The $100(1-\al)\%$ bootstrap CI of $\hvth$ is given by $[\hvth_L, \; \hvth_U]$, where $\hvth_L$ and $\hvth_U$ are the $100(\al/2)$ and $100(1-\al/2)$ percentiles of $\{\hvth_g^*\}^G_{g=1}$, respectively.
\end{itemize}

Here we introduce the parameters configurations in the numerical experiments. For the SNND example, we choose $\bth_1=(\mu, \si, \de)=(3,0.5,2)^{\T}$, $n \in \{1000,10000,50000\}$. For the STND example, we choose $\bth_2 =(\mu, \si, \de, \nu )=(3,0.5,2,2)^{\T}$, $n \in \{1000,10000,50000\}$. For the TTND example, we choose $\bth_3=(\mu, \si )=(3,0.5)^{\T}$, $n \in \{500,1000,5000,10000,50000\}$.

For the Multivariate Laplace-family distribution, we set the parameters, relatively. For the MLD example, we choose $\bth_4=(\bmu,\bSi)=\left(\begin{array}{lll}
1 & 0.1 & 0.1 \\
1 & 0.1 & 0.2
\end{array}\right)$, $n \in \{50,100,500,1000,5000\}$.
For the MALD example, we choose $\bth_5=(\bmu, \bmu^*, \bSi)=\left(\begin{array}{llll}
0 & 1 & 0.1 & 0.1 \\
0 & 1 & 0.1 & 0.2
\end{array}\right)$, $n \in \{50,100,200,500\}$.
For the Type \II \ MLD example, we choose $\bth_6=(\bmu,\bSi)=\left(\begin{array}{lll}
0 & 0.1 & 0.1 \\
0 & 0.1 & 0.2
\end{array}\right)$, $n \in \{50,100,200\}$.
For the LMM example, we set the $\bbe=(2,1,5,1,4,12,3)^{\T}$, $\bpsi_{\th_1}=\text{diag}(0.5,0.5,0.25,0.25,0.25,0.5)$,  $\bLa_{\th_2}=\si^2\bI$ and $\si^2=0.5$. In this case, Z structure divides the data evenly into six groups, $\bth_7=\{\bbe, \bpsi_{\th_1},\si^2\}$. We choose $n \in\{18,180,360\}$. For the GPED example, we choose $\th=2$, $n \in \{20000,100000,500000,1000000\}$. For the FND example, we choose $\bth_8=(\mu, \si )=(3,0.5)^{\T}$, $n \in \{100,500,1000,5000,10000\}$.

For all initial values of parameters, we set them to 0 if they can be 0, and 1 otherwise. The precision $\de$ is set to be $10^{-6}$. The bootstrap method is adopted to obtain the MSE. Finally the above process is repeated 1,000 times to compute the coverage probability of the percentile bootstrap CI. The simulation results are summarized in Table 1, 2 and 3.

For the visualization of results, we consider both scenarios with STND and TTND case, where the sample size is set to be $n = 600(100)1500$, $a(s)b$ means from $a$ to $b$ with step size being $s$. For the case of STND, we choose the setting of $(\mu,\si,\de)$ as $(3,0.5,1)^{\T}$. The estimation performance and tendency including the MLEs of $(\mu,\si,\de)$ and the iteration times when the values of parameters converged under different sample sizes are described by the boxplots. For the case of TTND, we choose the setting of $(\mu,\si)$ as $(2,0.8)^{\T}$.

%

The simulation results show the convergence of the algorithm and that it is more stable and converges faster as the number of samples increases.

In those simulations, the estimated results of the parameters are also quite satisfactory. With the various sample sizes, it is relatively stable for the MSE and the CP of the percentile bootstrap CIs, where the CPs of the percentile bootstrap CIs are over 0.90 even 0.95 with excellent performance. it is delighted to discover that the MSE of the MLEs decreases with the increasing sample sizes and fluctuate steadily in a small range. In the meantime, with the number of parameters increasing, the consumption of the N-EM algorithms are adding since each iteration is computationally much more expensive. The CPs of the percentile bootstrap CIs for FND example are significantly lower than in other examples because the CPs of $\si^2$ are not performing well, the specific reason refers to Liu \et [26].
\section{$\!\!\!\!\!\!\!\!$. Convergence properties}  

We establish the theoretical properties of the proposed N-EM algorithms including local convergence and global convergence, and convergence rate. The proofs are also provided.

Let $\ell(\bth)$ be the function to be maximized and $Q(\bth|\btht)$ be the surrogate function, where $\bth$ is the parameter vector and $\btht$ be its current estimate. Denote the maximizer of $Q(\cdot|\bth)$ by $M(\bth)$. Following Proposition 15.3.1 and Proposition 15.4.3 of in Lange  (2010), we first give general and verifiable conditions for proving the local and global convergence of an N-EM sequence.

\begin{proposition}\label{p1}
Assume that the surrogate function $Q(\bth|\btht)$ is strictly concave, then the N-EM algorithm is locally attracted to a local optimum $\bth^\infty$ at a linear rate equal to the spectral radius of $I-d^{20}Q(\bth^\infty|\bth^\infty)^{-1}d^2\ell(\bth^\infty)$.
\end{proposition}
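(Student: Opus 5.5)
The plan is the standard fixed-point argument behind Proposition 15.3.1 of Lange (2010), applied to the N-EM map $M(\bth)=\arg\max_{\bth'} Q(\bth'|\bth)$. I take $Q$ and $\ell$ to be twice continuously differentiable near $\bth^\infty$. I also take $\bth^\infty$ to be a fixed point of $M$ lying in the interior of $\bTh$, with $d^2\ell(\bth^\infty)$ negative definite. Strict concavity of $Q(\cdot|\bth)$ then makes $M(\bth)$ unique and characterized by the stationarity equation $d^{10}Q(M(\bth)|\bth)=\0$. Here $d^{10}$ and $d^{20}$ denote the gradient and Hessian in the first argument, and $d^{11}$ the mixed second derivative. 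Strict concavity also makes $d^{20}Q(\bth^\infty|\bth^\infty)$ negative definite, hence invertible. The implicit function theorem then shows that $M$ is continuously differentiable near $\bth^\infty$, with
$$
dM(\bth^\infty) = -\,d^{20}Q(\bth^\infty|\bth^\infty)^{-1}\, d^{11}Q(\bth^\infty|\bth^\infty).
$$

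Next I express $d^{11}Q$ through $\ell$ using the minorization structure of \S 2.2. By (\ref{eqn2.NEM10})--(\ref{eqn2.NEM12}), $Q^*(\bth|\btht)$ differs from $Q(\bth|\btht)$ only by a term depending on $\btht$ alone. Hence $Q^*$ and $Q$ have the same derivatives $d^{10}$, $d^{20}$ and $d^{11}$. Set $D(\bth|\btht)=\ell(\bth)-Q^*(\bth|\btht)$. Then $D\ge 0$, with equality on the diagonal $\bth=\btht$, so for each fixed $\btht$ the function $D(\cdot|\btht)$ is minimized at $\bth=\btht$. Therefore $d^{10}D(\btht|\btht)=\0$ for every $\btht$. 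Differentiating this identity along the diagonal gives $d^{20}D+d^{11}D=0$ there. Because $\ell$ does not depend on the second argument, this becomes
$$
d^{11}Q(\bth|\bth) = -d^{11}D(\bth|\bth) = d^{20}D(\bth|\bth) = d^2\ell(\bth)-d^{20}Q(\bth|\bth).
$$
Substituting at $\bth^\infty$ yields $dM(\bth^\infty)=I-d^{20}Q(\bth^\infty|\bth^\infty)^{-1}d^2\ell(\bth^\infty)$, which is the matrix in the statement.

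It remains to show that the spectral radius $\rho$ of this matrix is below one, and to conclude local linear convergence. Write $A=-d^{20}Q(\bth^\infty|\bth^\infty)$, which is positive definite, and $B=-d^2\ell(\bth^\infty)$. Since $D(\cdot|\bth^\infty)\ge 0$ attains its minimum $0$ at $\bth^\infty$, the Hessian $d^{20}D=A-B$ is positive semidefinite, so $A\succeq B$. With $B\succ 0$ this gives $A\succeq B\succ 0$. The matrix $I-A^{-1}B$ is similar to $I-A^{-1/2}BA^{-1/2}$, whose eigenvalues lie in $[0,1)$, so $\rho<1$. Finally, the Ostrowski theorem for a differentiable fixed-point map shows that $\btht\to\bth^\infty$ for starting values near $\bth^\infty$, at linear rate $\rho$.

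The main obstacle is justifying the differentiation under the integral sign in $Q^*$. This is needed so that $d^{11}$ exists and commutes as used above. It also requires $\btht\mapsto g(\cdot|\ibx_i,\btht)$ to be smooth, and I would impose it as a dominated-convergence regularity assumption. A second subtlety is the strict inequality $\rho<1$: it needs $d^2\ell(\bth^\infty)$ to be negative definite, i.e.\ a strict local maximum, which I would state explicitly as part of the meaning of "local optimum".
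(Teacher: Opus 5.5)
Your proposal is correct and follows essentially the same route as the paper. Both arguments implicitly differentiate $d^{10}Q(M(\bth)|\bth)=0$ to get $dM$, differentiate the gradient-tangency identity along the diagonal to replace $d^{11}Q$ by $d^2\ell-d^{20}Q$, and then combine a Rayleigh-quotient bound (your similarity to $I-A^{-1/2}BA^{-1/2}$) with Ostrowski's theorem to put the eigenvalues in $[0,1)$. One small repair: strict concavity alone only makes $d^{20}Q(\bth^\infty|\bth^\infty)$ negative semidefinite, but your inequality $A\succeq B\succ 0$, which comes from the minorization rather than the implicit function theorem, already gives the invertibility you need.
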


\begin{proof}
Since $\nabla Q(M(\bth)|\bth) = \bf 0$, it is easy to show that $M(\bth)$ is continuously differentiable with differential
\begin{equation}\label{eqn5.1}
    \rd M(\bth) = -\rd^{20}Q(M(\bth)|\bth)^{-1} \rd^{11}Q(M(\bth)|\bth).
\end{equation}
Furthermore, $\nabla\ell(\bth) - \nabla Q(M(\bth)|\bth) = \bf0$. Taking differential on both sides and set $\bth=\bth^\infty$, we have
\begin{equation}\label{eqn5.2}
    \rd^2\ell(\bth^\infty) - \rd^{20}Q(\bth^{\infty}|\bth^\infty) -\rd^{11}Q(\bth^\infty|\bth^\infty) = \bf0.
\end{equation}
Substituting (\ref{eqn5.2}) into (\ref{eqn5.1}), we have $\rd M(\bth^\infty)=I-\rd^{20}Q(\bth^\infty|\bth^\infty)^{-1}\rd^2\ell(\bth^{\infty})$.
By Lange's Lemma,
it is then sufficient to show that all the eigenvalues of the differential $\rd M(\bth^\infty)$ belong to $[0,1)$. Here we determine the eigenvalues of $\rd M(\bth^\infty)$ by the stationary values of the Rayleigh quotient
$$
    R(\ibv) = \frac{\ibv^{\T}[\rd^{20}Q(\bth^{\infty}|\bth^\infty)-\rd^2\ell(\bth^\infty)]\ibv}{\ibv^{\T}\rd^{20}Q(\bth^{\infty}|\bth^\infty)\ibv}=1-\frac{\ibv^{\T}\rd^2\ell(\bth^\infty)\ibv}{\ibv^{\T}\rd^{20}Q(\bth^{\infty}|\bth^\infty)\ibv}.
$$
At the optimal point $\bth^\infty$, both $\rd^2\ell(\bth^\infty)$ and $Q(\bth^\infty|\bth^\infty)$ are negative definite and $R(\ibv)<1$ for any unit vector $\ibv$.  The maximum of $R(\ibv)$ is strictly less than 1. Note also that $\rd^{20}Q(\bth^{\infty}|\bth^\infty)-\rd^2\ell(\bth^\infty) $ is negative semidefinite. It follows that $R(\ibv)\ge0$ and the minimum of $R(\ibv)$ is not less than $0$.
\end{proof}
The mapping functions $\bth^{(t+1)}=M(\btht)$ of the examples in Section 4 are differentiable and  the  surrogate functions in (\ref{eqn3.NEMTGL.3}), (\ref{eqn3.NEMTGL.8}), (\ref{eqn3.NEMTGL.13}), (\ref{eqn3.NEMTGL3.17}), (\ref{eqn3.NEMTGL3.23}), (\ref{eqn3.2.30}), (\ref{eqn3.2.34}) and (\ref{eqn3.2.36}) are strictly concave. The
local convergence results follow directly by Proposition 1.

\begin{corollary}\label{c1}
With an initial value $\bth^{(0)}$, the sequences $\{\btht\}$ generated by the N-EM algorithms which update the estimates by (\ref{eqn3.NEMTGL.4}), (\ref{eqn3.NEMTGL.9}), (\ref{eqn3.NEMTGL3.14}), (\ref{eqn3.NEMTGL3.18}), (\ref{eqn3.NEMTGL3.24}), (\ref{eqn3.2.31}), (\ref{eqn3.2.35}) and (\ref{eqn3.2.37}) respectively, are convergent to a local optimal $\bth^{\infty}$.
\end{corollary}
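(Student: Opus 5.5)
The plan is to reduce Corollary \ref{c1} to Proposition \ref{p1}, applied separately to each listed example. We combine it with the ascent property of \S 2.2.3 to move from local attraction to convergence of the whole sequence.

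\textbf{Step 1: hypotheses of Proposition \ref{p1}.} For each update formula we identify the surrogate that produces it: $Q_1$ in (\ref{eqn3.NEMTGL.3}), $Q_2^{\rm New}$, $Q_3^{\rm New}$, $Q_4^{\rm New}$, the Laplace-family $Q$-functions, the PIG $Q_4$ in (\ref{eqn3.2.36}), and the grouped-gamma $Q$. We then check strict concavity in a suitable parametrization.

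For the normal-type surrogates, reparametrize by $(\mu, \si^{-2}, \si_*^{-2})$, or by $(\bmu,\bSi^{-1})$ in the Laplace cases. In these coordinates the functions $-\frac n2\log\si^2-\sum w_i(x_i-\mu)^2/(2\si^2)$ and $-\frac n2\log|\bSi|-\frac14\sum a(\ibx_i,\btht)(\ibx_i-\bmu)^{\T}\bSi^{-1}(\ibx_i-\bmu)$ are standard strictly concave log-likelihoods of weighted Gaussians, because the weights $w_i^{(t)}$, $a_1$, $a_2$ and $b_2$ are positive.

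For (\ref{eqn3.2.36}) the Hessian is diagonal with entries $-n/\th_k^2-d_k(\btht)<0$. For the grouped gamma case, $n\al\log\be-n\log\Ga(\al)$ is strictly concave in $(\al,\be)$ because the trigamma function satisfies $\al\,\psi'(\al)>1$. In every case the M-step is the unique stationary point, so the map $M$ is well defined and smooth by the implicit function theorem. Proposition \ref{p1} then gives local linear attraction to any local maximizer $\bth^\infty$ at which $\rd^2\ell(\bth^\infty)$ is negative definite.

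\textbf{Step 2: from local attraction to convergence from $\bth^{(0)}$.} By (\ref{eqn3.3}) the sequence $\ell(\btht|\Y{obs})$ is nondecreasing. We will argue that the upper level set $\{\bth:\ell(\bth)\ge\ell(\bth^{(0)})\}$ is compact in each model. This holds because the likelihood tends to $-\infty$ as $\si\to0$, $\si\to\infty$, $\th_k\to0$ or $\infty$, $\bSi$ tends to singular or unbounded, or $|\mu|\to\infty$. Compactness gives bounded iterates with limit points.

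Following Proposition 15.4.3 of Lange (2010), strict concavity of $Q$ plus continuity of $M$ imply $\|\bth^{(t+1)}-\btht\|\to0$, and every limit point is a fixed point of $M$, hence a stationary point of $\ell$. If the stationary points in the level set are isolated, the set of limit points is connected and so reduces to a single point $\bth^\infty$. Step 1 then yields the linear rate.

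\textbf{Main obstacle.} We expect the hardest part to be the global claims: coercivity of $\ell$ and isolation of stationary points, especially for the skew families where $\de$ may drift. For the skew models we would restrict to $\de\in[\de_{\min},\de_{\max}]$, or assume the data are not degenerate, to enforce compactness. If isolation cannot be verified, we would settle for the conclusion stated in Lange's framework: every limit point is stationary, and convergence holds whenever the algorithm enters the basin of attraction from Proposition \ref{p1}.
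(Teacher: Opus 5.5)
Your skeleton (check the hypotheses of Proposition~\ref{p1} model by model, then invoke it) is the paper's whole proof, which simply asserts that the surrogates are strictly concave and the maps $M$ differentiable. The verification you add, however, breaks down at concrete points.

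(i) $(\mu,\si^{-2})$ is not a concave parametrization. For $f(\mu,\tau)=\frac n2\log\tau-\frac{\tau}{2}\sum_i w_i(x_i-\mu)^2$ the Hessian determinant is $\frac{nW}{2\tau}-W^2(\bar x_w-\mu)^2$, with $W=\sum_i w_i$ and $\bar x_w$ the weighted mean. This is negative for $\mu$ far from $\bar x_w$, and the same failure occurs for $(\bmu,\bSi^{-1})$. The natural coordinates $(\mu/\si^2,1/\si^2)$, resp.\ $(\bSi^{-1}\bmu,\bSi^{-1})$, do give concavity for the Laplace, asymmetric Laplace and truncated-normal surrogates, each of which has a single precision. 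Your PIG and grouped-gamma Hessian checks are correct.

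No reparametrization rescues the three skew families. There $1/\si^2$ and $1/\si_*^2$ multiply two different quadratics in the same $\mu$. Profiling out both precisions leaves $-\frac n2\log A(\mu)-\frac n2\log B(\mu)$, where $A(\mu)=\sum_i(x_i-\mu)^2$ and $B(\mu)=\sum_i E_{g_1}[(S-\mu)^2\mid x_i,\btht]$; $B$ is exactly the quadratic multiplying $-1/(2\si_*^2)$ in (\ref{eqn3.NEMTGL.3}). When the centre of $B$ (essentially $\mut$) is far from $\bar x$ relative to both spreads, this profile has two local maxima. An example is $\mut=0$, $\si_*^{2(t)}=1$, with data from $\SN(3,0.25,2)$ as in the numerical experiments. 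Hence $Q_1(\cdot\mid\btht)$ is not concave in any parametrization, and your claim that the M-step is the unique stationary point fails.

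(ii) Several displayed updates are not joint stationary points of $Q$ at all. (\ref{eqn3.NEMTGL.4}) solves the $\mu$-equation with $\de$ frozen at $\de^{(t)}$, then the $\si^2$-equation with $\de^{(t)}$, then the $\de$-equation. Likewise, (\ref{eqn3.2.34}) puts $\log\be^{(t)}$ into the $\al$-equation. These are cyclic conditional maximizations (ECM-type), so $\nabla Q(M(\bth)\mid\bth)=\mathbf{0}$ fails and $I-d^{20}Q^{-1}d^2\ell$ is not their Jacobian. Proposition~\ref{p1} therefore does not apply to these maps as written.

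What rescues the local statement is that the proof of Proposition~\ref{p1} uses concavity of $Q$ only to obtain $d^{20}Q(\bth^\infty\mid\bth^\infty)\prec0$ and a smooth M-map, and the first of these needs no concavity. For every surrogate here, after adding back its $\bth$-free constant, $\ell(\bth)-Q(\bth\mid\bth^\infty)\ge0$ with equality at $\bth^\infty$. Hence $d^{20}Q(\bth^\infty\mid\bth^\infty)\preceq d^2\ell(\bth^\infty)\prec0$. The implicit function theorem then gives a smooth local branch of the exact M-map at a fixed point $\bth^\infty$, and the Rayleigh-quotient argument goes through. The ECM-type updates still need their own rate argument, as in Meng and Rubin's ECM theory, or must be replaced by exact joint maximizers.

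Your Step 2 also fails, because the compactness claim is false for two families.
\begin{itemize}
\item For the truncated normal with $\mu$ fixed and $\si\to\infty$, the density tends to the uniform on $[a,b]$, so the log-likelihood tends to the finite value $-n\log(b-a)$.
\item For the skew families with $\mu<\min_i x_i$, the log-likelihood increases in $\de$ to a finite half-normal, half-$t$ or half-Laplace limit. Upper level sets can therefore be unbounded in $\de$ (the familiar infinite-MLE phenomenon).
\end{itemize}
Restricting $\de$ to $[\de_{\min},\de_{\max}]$ changes the algorithm rather than controlling the unconstrained iterates. So convergence from an arbitrary $\bth^{(0)}$ cannot be obtained this way. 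What you can defend is local attraction to a nondegenerate fixed point. That is also all the paper's proof delivers; it defers global convergence to Proposition~\ref{p2} and Corollary~\ref{c2}, which assume compactness.
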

A function $f: \bbR^q \rightarrow \bbR \bigcup \{-\infty, \infty\}$ is coercive if and only if $f(x)\rightarrow \infty$ as $ ||x||_2\rightarrow \infty,$ where $||\cdot||_2$ denotes the standard Euclidean norm.
\begin{proposition}\label{p2}
Assume that $-\ell(\bth)$ is coercive, the subset $\{\bth \in \Om: \ell(\bth)\ge\ell(\btht)\}$ of parameter domain $\Om$ is compact and all stationary points of $\ell(\bth)$ are isolated.  The  minorizing function $Q(\bth|\btht)$ constructed by the N-EM algorithm is strictly concave and differentiable on both $\bth$ and $\btht$. Then the N-EM sequence $\bth^{(t+1)} = M(\btht)$  converges to the stationary point of $\ell(\bth)$. If $\ell(\bth)$ is strictly concave, then the limiting point of $\{\btht\}$ is the maximum.
\end{proposition}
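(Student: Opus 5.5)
Following Proposition 15.4.3 of Lange (2010), the plan is to combine the ascent property of \S 2.2 with a Zangwill-type argument on the compact level set. The proof has four steps.

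\emph{Step 1: the iterates stay in a compact set.} By (\ref{eqn3.3}), the sequence $\ell(\btht)$ is nondecreasing. Hence every iterate lies in the level set $\cS_0=\{\bth\in\Om:\ell(\bth)\ge\ell(\bth^{(0)})\}$, which is compact by assumption (coercivity of $-\ell$ guarantees this). Since $\ell$ is continuous on $\cS_0$, it is bounded above there, so $\ell(\btht)$ converges to a finite limit $\ell^*$. Because $\{\btht\}$ lies in a compact set, it has at least one cluster point.

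\emph{Step 2: the map $M$ is continuous.} Strict concavity of $Q(\cdot|\bth)$ makes the maximizer $M(\bth)$ unique. Joint differentiability of $Q$ in $(\bth,\btht)$, together with the implicit function theorem applied to $\nabla Q(M(\bth)|\bth)=\bf 0$ as in (\ref{eqn5.1}), gives continuity of $M$. (One could instead use a maximum-theorem argument on compact sets.)

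\emph{Step 3: every cluster point is a fixed point and a stationary point.} Take a subsequence $\bth^{(t_k)}\to\bth^*$. By continuity of $M$, $\bth^{(t_k+1)}=M(\bth^{(t_k)})\to M(\bth^*)$. Therefore $\ell(M(\bth^*))=\ell^*=\ell(\bth^*)$. Now use the strict inequality case of (\ref{eqn3.3}): if $M(\bth^*)\ne\bth^*$, strict concavity gives $Q(M(\bth^*)|\bth^*)>Q(\bth^*|\bth^*)$, hence $\ell(M(\bth^*))>\ell(\bth^*)$, a contradiction. So $M(\bth^*)=\bth^*$. Since $Q^*-Q$ is constant in $\bth$ and $\ell-Q^*(\cdot|\bth^*)$ is minimized at $\bth^*$, we get $\nabla\ell(\bth^*)=\nabla Q(\bth^*|\bth^*)=\bf 0$. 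Thus $\bth^*$ is stationary.

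\emph{Step 4 (main obstacle): the whole sequence converges.} I would first show $\|\bth^{(t+1)}-\btht\|\to0$. Suppose not. Then there is a subsequence with $\|M(\bth^{(t_k)})-\bth^{(t_k)}\|\ge\ve$. Passing to a further convergent subsequence, continuity of $M$ produces a cluster point $\bth^*$ with $\|M(\bth^*)-\bth^*\|\ge\ve$, contradicting Step 3. With consecutive differences tending to zero, the set of cluster points of a bounded sequence is connected (Ostrowski's theorem). That set is contained in the set of stationary points, whose points are all isolated, so it is a singleton. Hence $\btht\to\bth^\infty$, a stationary point of $\ell$.

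Finally, if $\ell$ is strictly concave, its unique stationary point is the global maximizer. So the limit $\bth^\infty$ is the maximum.
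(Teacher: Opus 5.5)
Your argument is correct and takes the same route as the paper's proof. Both show that the cluster set of the iterates lies in the stationary set: the paper cites Lange's Liapunov theorem for this, while your Steps 2--3 prove it directly from continuity of $M$ and the strict ascent forced by strict concavity of $Q(\cdot|\btht)$. Both then use connectedness of the cluster set together with isolation of stationary points to reduce it to a single point. Your Step 4 is slightly more complete than the paper, because it checks $\|\bth^{(t+1)}-\btht\|\to 0$; the connectedness result the paper invokes needs this hypothesis, but the paper never verifies it.
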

\begin{proof}
Let $\Ga$ be the set of cluster points generated by the sequence $\bth^{(t+1)} = M(\btht)$ starting from the initial value $\bth^{(0)}$. By the Liapunov's theorem in  , $\Ga$ is contained in the set $\Delta$ of stationary points of $\ell(\bth)$. On the other hand, $\Ga$ is a closed subset of the compact set $\{\bth \in \Om: \ell(\bth)\ge\ell(\bth^{(0)})\}$ and this implies $\Ga$ is also compact. According to Proposition 8.2.1 in Lange (2010), $\Ga$ is connected. The condition that all stationary points of $\ell(\bth)$ are isolated easily implies that the number of stationary points in the compact set $\{\bth \in \bOm: \ell(\bth)\ge\ell(\bth^{(0)})\}$ can only be finite. Since the cluster set $\Ga$ is a connected subset of  finite set $\Delta$,  $\Ga$ reduces to a singleton.
\end{proof}
We thus have the following result for the examples in Section 4.

\begin{corollary}\label{c2}
Assume the differentiability and coerciveness of $-\ell(\bth)$ hold, all stationary points of $-\ell(\bth)$ are isolated and the subsets $\{\bth \in \bOm: \ell(\bth)\ge\ell(\btht)\}$ of parameter domain $\bOm$ are compact for examples in Section 5, 6 and 7. Then the sequence of iterates in (\ref{eqn3.NEMTGL.4}), (\ref{eqn3.NEMTGL.9}), (\ref{eqn3.NEMTGL3.14}), (\ref{eqn3.NEMTGL3.18}), (\ref{eqn3.NEMTGL3.24}), (\ref{eqn3.2.31}), (\ref{eqn3.2.35}) and (\ref{eqn3.2.37})  converge to the stationary points of $\ell(\bth)$ respectively. If the strict concavity of $\ell(\bth)$ also hold for the eight examples, then the sequence of iterates in (\ref{eqn3.NEMTGL.4}), (\ref{eqn3.NEMTGL.9}), (\ref{eqn3.NEMTGL3.14}), (\ref{eqn3.NEMTGL3.18}), (\ref{eqn3.NEMTGL3.24}), (\ref{eqn3.2.31}), (\ref{eqn3.2.35}) and (\ref{eqn3.2.37}) converge to the maximum points of $\ell(\bth)$ respectively.
\end{corollary}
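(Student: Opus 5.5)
The plan is to treat Corollary~\ref{c2} as a direct specialization of Proposition~\ref{p2}. For each of the eight examples we check the hypotheses of Proposition~\ref{p2} that are not already assumed in the corollary. Coerciveness of $-\ell(\bth)$, compactness of the upper level sets $\{\bth\in\bOm:\ell(\bth)\ge\ell(\btht)\}$, isolation of the stationary points, and (for the second assertion) strict concavity of $\ell$ are taken as hypotheses. What remains to verify is that each surrogate function used to produce the listed updates is a genuine minorizer of $\ell(\cdot\mid\Y{obs})$, tangent at $\btht$, strictly concave in $\bth$, and differentiable jointly in $(\bth,\btht)$. The convergence statements then follow by quoting Proposition~\ref{p2} verbatim.

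\textbf{Step 1: minorization and tangency.} For the Case~I examples, whose $Q$-functions are of the form (\ref{eqn2.3}), the minorization and tangency conditions (\ref{eqn3.1}) are exactly what \S2.2 proves via (\ref{eqn2.NEMtgl10})--(\ref{eqn2.NEM12}). The $Q$-function differs from $Q^*$ only by a $\bth$-free constant, so it has the same maximizer. For the examples with a difficult M-step, namely the skew-$t$, skew-Laplace and truncated normal cases, the surrogate $Q^{\rm New}$ is obtained by chaining the N-EM lower bound with (\ref{eqn2.NEM17}), (\ref{eqn2.NEM19}) or (\ref{eqn2.NEM21}). Each of these inequalities has equality iff $\bth=\btht$, and a composition of tangent minorizers is again a tangent minorizer, so $Q^{\rm New}(\cdot\mid\btht)$ satisfies (\ref{eqn3.1}) as well.

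\textbf{Step 2: joint differentiability.} This holds because every ingredient is smooth in $\btht$. The ingredients are expectations under $g(\cdot\mid\ibx_i,\btht)$, such as $a_1$, $a_2$, $b_2$, $b_4$, $a_j$, $b_j$ and $g_1(x_i\mid x_i,\btht)$, together with the weights $w_i^{(t)}$, $\st_1$ and $\st_2$. Smoothness follows by differentiating under the integral sign, which is justified by dominated convergence since the integrands are exponential-family-type kernels.

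\textbf{Step 3: strict concavity, the main obstacle.} The difficulty is that strict concavity is not literal in the natural parametrization: the surrogates involve $\log\si^2$ and $(x-\mu)^2/\si^2$, and $-\frac{n}{2}\log|\bSi|-\frac14\sum_i a_i(\ibx_i-\bmu)^{\T}\bSi^{-1}(\ibx_i-\bmu)$ is not jointly concave in $(\bmu,\bSi)$. First I would reparametrize by precision-type coordinates, such as $(\bmu,\bSi^{-1})$ or $(\mu/\si^2,1/\si^2)$. In these coordinates each surrogate becomes a sum of a log-determinant or log term and a negative-definite quadratic form, which is strictly concave provided the positive weights $a_i$, $b_i$, $w_i^{(t)}$ are not all degenerate. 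Since Proposition~\ref{p2}'s conclusion concerns limits of iterates and is invariant under smooth bijective reparametrization, this suffices. Failing that, I would fall back on uniqueness of the maximizer $M(\btht)$: it is available in closed form from the stationarity equations (\ref{eqn3.NEMTGL.4}), (\ref{eqn3.2.37}), etc., and a unique continuous $M$ is all the Liapunov argument in the proof of Proposition~\ref{p2} actually uses.

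With these checks done, the first assertion of the corollary is Proposition~\ref{p2} applied to each of the eight examples: the iterates converge to stationary points of $\ell$. The second assertion follows from the last sentence of Proposition~\ref{p2}, because under strict concavity the stationary point is the unique maximum.
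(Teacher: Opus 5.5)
Your reduction to Proposition~\ref{p2} is exactly what the paper does. The paper gives no argument beyond that, relying on the bare assertion made just before Corollary~\ref{c1} that the surrogates are strictly concave and the update maps differentiable.

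The gap is in your Step~3, where you try to justify that assertion. Neither of your two routes works for the skew-normal family.

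\emph{The reparametrization route fails.} In $Q_1$, and likewise in $Q_2^{\rm New}$ and $Q_3^{\rm New}$, the location $\mu$ is shared by two Gaussian-type blocks with different scales $\si^2$ and $\si_*^2=\si^2/\de^2$. Set $S_1(\mu)=\sum_i(x_i-\mu)^2$ and $T(\mu)=n\si_*^{2(t)}+n(\mu-\mu^{(t)})^2-\si_*^{2(t)}\sum_i(x_i+\mu^{(t)}-2\mu)g_1(x_i|x_i,\btht)$. Then $Q_1=c-\frac n2\log\si^2-S_1(\mu)/(2\si^2)-\frac n2\log\si_*^2-T(\mu)/(2\si_*^2)$. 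This is a common-mean (curved) structure, with four natural parameters but only three free ones. So there are no precision-type coordinates in which it becomes a log term plus a negative-definite quadratic. Worse, profiling out the two scales gives $c'-\frac n2\log S_1(\mu)-\frac n2\log T(\mu)$. This is a sum of two Cauchy-type log kernels, centred at $\Bx$ and at $\mu^{(t)}-\si_*^{2(t)}\bar g$, where $\bar g=\frac1n\sum_i g_1(x_i|x_i,\btht)$. It has two separated strict local maxima whenever the centres are far apart relative to the spreads, for example $\Bx\approx3.4$, sample variance $\approx0.12$, $\mu^{(t)}=0$, $\si_*^{(t)}=1$. A function with two strict local maxima is concave in no parametrization, so this step cannot be repaired. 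Your recipe also does not cover the Type~\II\ Laplace surrogate, where the weight matrix $\bSi^{-1}\circ E_{g_3}[\ibu^{-1/2}(\ibu^{-1/2})^{\T}]$ varies with $i$.

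\emph{The fallback misidentifies the iterates.} Equations (\ref{eqn3.NEMTGL.4}), (\ref{eqn3.NEMTGL.9}) and (\ref{eqn3.NEMTGL3.14}) are not solutions of the joint stationarity equations of the surrogate. Solving $\partial Q_1/\partial\mu=0$ puts $\de^2$, not $\de^{(t)2}$, into the $\mu$-formula. Solving the $\si^2$- and $\de^2$-equations together gives $\si^2=S_1(\mu)/n$, whereas the paper has $\si^{2(t+1)}=[S_1(\mu^{(t+1)})+\de^{(t)2}T(\mu^{(t+1)})]/(2n)$. The displayed updates are cyclic conditional maximizations: $\mu$ given $\de^{(t)}$, then $\si^2$ given $\de^{(t)}$, then $\de$. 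Hence $\bth^{(t+1)}\neq\arg\max_{\bth}Q(\bth|\btht)$, and Proposition~\ref{p2} does not apply verbatim.

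What does work is the Liapunov argument run on the actual continuous update map $\mathcal M$.
\begin{itemize}
\item Each sub-step has a unique maximizer: a concave quadratic in $\mu$, and $-n\log\si^2-b/\si^2$ and $\frac n2\log\de^2-\de^2T/(2\si^2)$ with $b,T>0$. So $Q(\mathcal M(\bth)|\bth)>Q(\bth|\bth)$, hence $\ell(\mathcal M(\bth))>\ell(\bth)$, unless $\mathcal M(\bth)=\bth$.
\item At a fixed point, every block partial derivative of $Q(\cdot|\bth)$ vanishes at $\bth$, so tangency gives $\nabla\ell(\bth)=\mathbf 0$.
\item Cluster points are therefore stationary. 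The compactness, isolation and connectedness argument of Proposition~\ref{p2} then finishes the proof, with strict concavity of $\ell$ identifying the limit as the maximizer.
\end{itemize}

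Separately, your Step~2 smoothness claim fails for the skew-Laplace weights $w_i^{(t)}=\si^{(t)}/|x_i-\mu^{(t)}|$ at $\mu^{(t)}=x_i$.
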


\section{$\!\!\!\!\!\!\!\!$. Discussion}  

Motivated by MLEs of complex likelihood function which contains the integral terms, we establish a unified N-EM framework to deal with the above situation. What's more important is that the N-EM algorithm solves the most critical problem faced in the use and it is a promotion of EM algorithms. In the original EM algorithm usage scenario, the user needs to design the EM algorithm according to the model characteristics to find missing-data structure case by case, and this design requires the user to be quite familiar with the EM algorithm and the objective model, still needs some inspiration. In the proposed N-EM algorithm, the biggest advantage is that it only needs to analyze the structure of the objective function, find the integral term or homogeneous discrete sum, and then follow the algorithm steps we introduced to automatically construct the normalized density function and M-iterations. Therefore, this algorithm reduces the complexity and difficulty of using the algorithm, basically does not require the user's experience and professional background, and is formula-driven rather than inspiration-driven, thus expanding the use of the EM algorithm for more users and applied situations. What is more, the N-EM algorithm can solve some problems that cannot be solved using the EM algorithm, and we can group three classic kinds of problems into the N-EM framework. Another contribution of the N-EM algorithm is to solve the problem of finite discrete homogeneous sum, which can be regarded as the integral term. At last, because of the form of normalized density function, we can consider a N-EM interpretation of EM. In theoretical perspectives, this algorithm has been proved to possess the ascent property and the optimality of normalized density function, then the theoretical properties of the N--EM algorithms including the local and global convergence under some general assumptions are well studied. The applications and numerical experiments can demonstrate the superiority of N-EM algorithm.

\baselineskip 0.25in \vspace{0.5cm}
\section*{References}
\baselineskip 0.30in
\begin{description} \itemsep=-\parsep \itemindent=-1.3cm

\item[ ] Brailean, J.C., Little, D., Giger, M.L., Chen, C.T. and Sullivan, B.J. (1992). Application of the EM algorithm to radiographic images. \textit{Medical Physics} {\bf 19}(5), 1175--1182.

\item[ ] Cappe, O. (2011). Online EM algorithm for hidden Markov models. \textit{ Journal of Computational and Graphical Statistics} {\bf 20}(3), 728--749.

\item[ ] Casella, G. and George, E.I. (1992). Explaining the gibbs sampler. \textit{ The American Statistician} {\bf 46}(3), 167--174.

\item[ ] Celeux, G. (1985). The SEM algorithm, a probabilistic teacher algorithm derived from the EM algorithm for the mixture problem. \textit{Computational statistics quarterly} {\bf 2}, 73--82.

\item[ ] Celeux, G. and Diebolt, J. (1986). The SEM and EM algorithms for mixtures, Numerical and statistical aspects. \textit{Procedings of the 7th Franco Belgian Meeting of Statistics. Bruxelles, Publication Des Facultes Universitaries St. Louis}

\item[ ] Chen, L. and Jiang, Q. (2008). An extended EM algorithm for subspace clustering. \textit{Frontiers of Computer Science in China} {\bf 2}(1), 81--86.

  \item[ ] Cohen AC Jr. (1949). On estimating the mean and standard deviation of truncated normal distributions. \textit{Journal of the American Statistical Association} {\bf 44}, 518--525.

\item[ ] Dejardin, D. and Lesaffre, E. (2013). Stochastic EM algorithm for doubly interval-censored data. \textit{Biostatistics} {\bf 14}(4), 766--778.

\item[ ] Dempster, A.P., Laird N.M. and Rubin D.B. (1977). Maximum likelihood from incomplete data via the EM algorithm (with discussions). \textit{Journal of the Royal Statistical Society B} {\bf 39}, 1--38.

\item[ ] Deng, W., Mou, T. and Niu, N. (2019). Alternating EM algorithm for a bilinear
model in isoform quantification from RNA-seq data. \textit{Bioinformatics} {\bf 3}(3), 805--812.

\item[ ] Fernandez, J. (2010). Missing observations in spatial models, a spectral EM algorithm. \textit{Journal of Computational and Graphical Statistics} {\bf 19}(3), 684--701.

\item[ ] G\'{o}mez, H.W., Venegas, O. and Bolfarine, H. (2007). Skew-symmetric distributions generated by the distribution function of the normal distribution. \textit{Environmetrics} \textbf{18}, 395--407.

\item[ ] Gupta, A.K., Chang, F.C. and Huang, W.J. (2002). Some skew-symmetric models. \textit{Random Operators Stochastic Equations} \textbf{10}, 133--140.

\item[ ] Hoogerheide, L., Opschoor, A. and Van Dijk, H.K. (2012). A class of adaptive importance sampling weighted EM algorithms for efficient and robust posterior and predictive simulation. \textit{Journal of Econometrics} {\bf 171}(2), 101--120.

\item[ ] Hung, W.L. and Chang-Chien, S.J. (2017). Learning-based EM algorithm for normal-inverse Gaussian mixture model with application to extrasolar planets. \textit{Journal of Applied Statistics} {\bf 44}(6), 978--999.

\item[ ] Jose, C.A. and  Burzykowski, T. (2010). A version of the EM Algorithm for proportional hazard model with random effects. \textit{Biometrical Journal} {\bf 47}(6), 847--862.

\item[ ] Kadir, S.N., Goodman, D.F. and Harris, K.D. (2014). High-dimensional cluster analysis with the masked EM algorithm. \textit{Neural computation} {\bf 26}(11), 2379--2394.

\item[ ] Karlis, D. and Xekalaki, E. (1999). Improving the EM algorithm for mixtures. \textit{Statistics and Computing} {\bf 9}(4), 303--307.

\item[ ] Kay, J. (1997). The EM algorithm in medical imaging. \textit{Statistical Methods in Medical Research} {\bf 6}(1), 55--75.

\item[ ] Kim D.K. (1997). Regression analysis of interval-censored survival data with covariates using log-linear models. \textit{Biometrics } {\bf 53}(4), 1274--1283.

\item[ ] Lange, K. (2010). \textit{Numerical Analysis for Statisticians, Second Edition.}  Springer,  New York.

\item[ ] Lange, K., Hunter, D.R. and Yang, I. (2000). Optimization transfer using surrogate objective functions (with discussions). \textit{ Journal of Computational and Graphical Statistics} {\bf 9}, 1--20.

  \item[ ] Liu, P.Y., Tian, G.L., Yuen, K.C., Zhang, C. and Tang, M.L. (2021). Proportional inverse Gaussian distribution: A new tool for analyzing continuous proportional data.  \textit{Australian \& New Zealand Journal of Statistics}, in press.

  \item[ ] Liu, X., Tian, G.L., Fei, Y., Shu, L. and Zhao, Q. (2020). Folded normal regression models with applications in biomedicine. \textit{Journal of Computational and Applied Mathematics} {\bf 379}, 112941.

\item[ ] Lucas, A. (1997). Robustness of the student $t$ based M-estimator. \textit{Communications in Statistics --- Theory and Methods} \textbf{26}(5),  1165--1182.

\item[ ] McLachlan, G.J. and Krishnan, T. (2007). \textit{The EM algorithm and extensions.}  John Wiley \& Sons, Hoboken, New Jersey.

\item[ ] Nadarajah, S. and Kotz, S. (2003). Skewed distributions generated by the normal kernel. \textit{Statistics \& Probability Letters} \textbf{65}, 269--277.

\item[ ] Nakashima, S., Sughiyama, Y. and Kobayashi T.J. (2020). Lineage EM algorithm for inferring latent states from cellular lineage trees. \textit{ Bioinformatics} {\bf 36}(9), 2829--2838.

\item[ ] Panic, B., Klemenc, J. and Nagode, M. (2020). Improved initialization of the EM algorithm for mixture model parameter estimation. \textit{ Mathematics} {\bf 8}(3), 373.

\item[ ] Ranjan, R., Huang, B. and Fatehi, A. (2016). Robust Gaussian process modeling using EM algorithm. \textit{Journal of Process Control} {\bf 42}, 125-136.

  \item[ ] Ripley, B.D. (2009). \textit{Stochastic simulation.} John Wiley \& Sons, New York.

  \item[ ] Rubin, D.B. (1987). The calculation of posterior distributions by data augmentation, Comment, A non-iterative sampling/importance resampling alternative to the data augmentation algorithm for creating a few imputations when fractions of missing information are modest, The SIR algorithm. \textit{Journal of the American Statistical Association} {\bf 82}(398), 543--546.

\item[ ] Suesse, T. and Zammit-Mangion, A. (2017). Computational aspects of the EM algorithm for spatial econometric models with missing data. \textit{Journal of Statistical Computation and Simulation} {\bf 87}(9), 1767--1786.

\item[ ] Teimouri, M. (2020). EM algorithm for mixture of skew-normal distributions fitted to grouped data. \textit{Journal of Applied Statistics} \textbf{48}(7), 1--26.

\item[ ] Tian, G.L., Ding, X., Liu, Y. and Tang, M.L. (2019). Some new statistical methods for a class of zero-truncated discrete distributions with applications. \textit{Computational Statistics} {\bf 34}(3), 1393--1426.

\item[ ] Tian, G. L., Huang, X.F. and Xu, J. (2019). An assembly and decomposition approach for constructing separable minorizing functions in a class of MM algorithms. \textit{Statistica Sinica} \textbf{29}(2), 961--982.

\item[ ] Tsagris, M., Beneki, C. and Hassani, H. (2014). On the folded normal distribution. \textit{Mathematics} {\bf 2}(2), 12--28.

  \item[ ] Tweedie, M.C.K. (1957). Statistical properties of inverse Gaussian distributions. I. \textit{The Annals of Mathematical Statistics} \textbf{28}(2), 362--377.

\item[ ] Vaida, F. and Liu, L. (2009). Fast implementation for normal mixed effects models with censored response. \textit{Journal of Computational and Graphical Statistics} {\bf 18}(4), 797--817.

\item[ ] Wang, X., Schumitzky, A. and David, Z.D. (2007). Nonlinear random effects mixture models, Maximum likelihood estimation via the EM algorithm. \textit{Computational Statistics \& Data Analysis} {\bf 51}(12), 6614--6623.

\item[ ] Wei, G.C. and Tanner, M.A. (1990). A monte carlo implementation of the EM algorithm and the poor man's data augmentation algorithms. \textit{Journal of the American statistical Association} {\bf 85}(411), 699704.

\item[ ] Wei, G.C. and Tanner, M.A. (1990a). Posterior computations for censored regression data. \textit{Journal of the American Statistical Association} {\bf 85}(411),829--839.

\item[ ] Wu, D. and Ma, J. (2018). A two-layer mixture model of Gaussian process functional regressions and its MCMC EM algorithm. \textit{IEEE transactions on neural networks and learning systems} {\bf 29}(10), 4894--4904.

\item[ ] Wu, D. and Ma, J. (2019). An effective EM algorithm for mixtures of Gaussian processes via the MCMC sampling and approximation. \textit{ Neurocomputing} {\bf 331}, 366--374.

\item[ ] Xie, M. and Simpson, D. (2015). Regression modeling of ordinal data with nonzero baselines. \textit{Biometrics} {\bf 55}(1), 308--316.

\item[ ] Xu, L. and Jordan, M.I. (1993). Unsupervised learning by EM algorithm based on finite mixture of Gaussians. \textit{1993 World Congress on Neural Networks} {\bf 2}, 431--434.

\item[ ] Yang, F. (2018). A stochastic EM algorithm for quantile and censored quantile regression models. \textit{Computational Economics} {\bf 52}(2), 555--582.

\end{description}

\newpage
\vkU \baselineskip 0.10in
\begin{Table}
\centering
\label{tab1} \ {\rm Simulation results for examples in Section 4.1
\vspace{-0.2cm}
\begin{center}
\renewcommand{\arraystretch}{1.10} \tabcolsep 0.09in \doublerulesep 1.0pt
\begin{tabular}{lccccccccccc} \hline \hline
  Settings    & $n$ &  No. of Para.   &  $K$       && Time    &&   $L$   & &   MSE & & CP  \\ \hline
\MC{10}{c}{SNND} \\
$\bth_1$    & 1000  &  3      &  360    &&   0.27   &&  $-1201.33 $ && 1.68e-02 && 0.9459 \\
$\bth_1$    & 10000 &  3      & 349     &&  1.67    &&  $-13190.57$ && 2.28e-03 && 0.9768 \\
$\bth_1$    & 50000 &  3      &  346    &&    8.10  &&  $-69910.33$ && 4.94e-04 &&  0.9162 \\ \hline
\MC{10}{c}{STND} \\
$\bth_2 $    & 1000  &  3      &  376    &&   0.26   &&  $-1703.30 $ && 3.01e-02 && 0.9684 \\
$\bth_2 $    & 10000 &  3      & 373     &&  1.98    &&  $-18054.59$ && 3.61e-03 && 0.9584 \\
$\bth_2 $    & 50000 &  3      &  372    &&    9.65  &&  $-98280.72$ && 5.88e-04 && 0.9056 \\ \hline
\MC{10}{c}{TTND} \\
$\bth_3$    & 500   &  2      &  5    &&   1.49e-05  &&  $-5354.36$  && 3.85e-04 && 0.9358 \\
$\bth_3$    & 1000  &  2      & 5     &&  1.31e-04   &&  $-11240.57$ && 1.78e-04 && 0.9641 \\
$\bth_3$    & 5000  &  2      &  5    &&    3.09e-04 &&  $-59876.49$ && 4.05e-05 && 0.9598 \\
$\bth_3$    & 10000 &  2      & 5     &&  4.97e-04   &&  $-13965.50$ && 1.97e-05 && 0.9781 \\
$\bth_3$    & 50000 &  2      & 5     &&  1.16e-03   &&  $-72652.00$ && 3.62e-06 && 0.9148 \\
\hline
\end{tabular}
\end{center}
}\end{Table}
\baselineskip 0.10in \vspace{-0.2cm}
\noi {\small K: iteration numbers; Time: run times (in seconds); L:  final objective values; \\
MSE $= \frac{1}{R}\sum_{r=1}^R \frac{||\hat{\bth}-\bth||^2}{q}$, where $q$ indicates the number of parameters. \\
CP: the coverage proportion of the percentile bootstrap CIs.}

\newpage
\vkU \baselineskip 0.10in
\begin{Table}
\centering
\label{tab2} \ {\rm Simulation results for examples in Section 4.2
\vspace{-0.2cm}
\begin{center}
\renewcommand{\arraystretch}{1.10} \tabcolsep 0.075in \doublerulesep 1.0pt
\begin{tabular}{lccccccccccc} \hline \hline
Settings    & $n$ &  No. of Para.   &  $K$       && Time    &&  $L$   & &   MSE & & CP  \\ \hline
\MC{10}{c}{MLD} \\
  $\bth_4$    & 50   &  6      &  9    &&   0.04  && $-117.89$   && 7.51e-03 && 0.9684 \\
  $\bth_4$    & 100  &  6      & 8     &&  0.07   && $-244.72$   && 2.12e-03 && 0.9791 \\
  $\bth_4$    & 500  &  6      &  6    &&    0.26 && $-1388.01$  && 2.24e-04 && 0.9484 \\
  $\bth_4$    & 1000 &  6      & 5     &&  0.44   && $-2860.00$  && 9.10e-05 && 0.9198 \\
  $\bth_4$    & 5000 &  6      & 5     &&  1.53   && $-14312.25$ && 1.84e-05 && 0.9489 \\ \hline
\MC{10}{c}{MALD} \\
$\bth_5$    & 50  &  8      & 20     &&   31.44  && $-122.09$  && 0.13 && 0.9634 \\
$\bth_5$    & 100 &  8      & 18     &&  55.95   && $-257.16$  && 0.06 && 0.9168 \\
$\bth_5$    & 200 &  8      & 15     && 88.86    && $-521.51$  && 0.04 && 0.9813 \\
$\bth_5$    & 500 &  8      & 11     &&  187.37  && $-1420.03$ && 0.01 && 0.9185 \\ \hline
\MC{10}{c}{Tpye II MLD} \\
$\bth_6$    & 50  &  6      & 12    &&  206.00   && $-113.07$  && 1.63e-03 && 0.9789 \\
$\bth_6$    & 100 &  6      & 11    &&  412.50   && $-241.03$  && 9.14e-04 && 0.9219 \\
$\bth_6$    & 200 &  6      &  8    &&   731.72  && $-496.12$  && 4.02e-04 && 0.9414 \\ \hline
\MC{10}{c}{LMM} \\
$\bth_7$    & 18  &  44        &  239     &&  0.09   && $-172.11$  && 0.14 && 0.9165 \\
$\bth_7$    & 180 &  44        &  198     &&  0.36   && $-1871.66$ && 0.03 && 0.8916 \\
$\bth_7$    & 360 &  44        &  179     &&  1.37   && $-3829.56$ && 0.02 && 0.9416 \\ \hline
\MC{10}{c}{GPED} \\
$\th $    & 20000   &  1      & 5       &&   1.99e-04 && $-43043.91$   && 1.28e-03 && 0.9486 \\
$\th $    & 100000  &  1      & 5       && 2.99e-04   && $-223123.27 $ && 1.08e-03 && 0.9243 \\
$\th $    & 500000  &  1      & 5       && 5.99e-04   && $-1151116.20$ && 1.00e-03 && 0.9974 \\
$\th $    & 1000000 &  1      & 5       && 1.80e-03   && $-2801671.68$ && 7.94e-04 && 0.9431 \\
\hline
\end{tabular}
\end{center}
}\end{Table}

\vkU \baselineskip 0.10in
\begin{Table}
\centering
\label{tab3} \ {\rm Simulation results for examples in Section 4.3
\vspace{-0.2cm}
\begin{center}
\renewcommand{\arraystretch}{1.10} \tabcolsep 0.095in \doublerulesep 1.0pt
\begin{tabular}{lccccccccccc} \hline \hline
Settings    & $n$ &  No. of Para.   &  $K$     && Time    &&   $L$   & &   MSE & & CP \\ \hline
\MC{10}{c}{FND} \\
$\bth_8$    & 100   &  2      & 15    && 0.24  &&  $-337.00$   && 0.05     && 0.8642 \\
$\bth_8$    & 500   &  2      & 14    && 0.25  &&  $-1754.83$  && 7.59e-03 && 0.9067 \\
$\bth_8$    & 1000  &  2      & 14    && 0.25  &&  $-3699.15$  && 4.10e-03 && 0.8733 \\
$\bth_8$    & 5000  &  2      & 14    && 0.39  &&  $-20663.56$ && 7.23e-04 && 0.8239 \\
$\bth_8$    & 10000 &  2      & 11    && 0.48  &&  $-43073.68$ && 5.71e-04 && 0.8416 \\
\hline
\end{tabular}
\end{center}
}\end{Table}

\end{document}